\relax
\documentclass[letterpaper]{article} 
\usepackage{aaai20}  
\usepackage{times}  
\usepackage{helvet} 
\usepackage{courier}  
\usepackage[hyphens]{url}  
\usepackage{graphicx} 
\urlstyle{rm} 
\usepackage{graphicx}  
\frenchspacing  
\setlength{\pdfpagewidth}{8.5in}  
\setlength{\pdfpageheight}{11in}  
 \pdfinfo{
/Title Limitations of Incentive Compatability on Discrete Type Space
/Author Taylor Lundy, Hu Fu}
\usepackage[utf8]{inputenc}
\usepackage[utf8]{inputenc}
\usepackage{amsfonts}
\usepackage{amsmath}
\usepackage{amsthm}
\usepackage{booktabs}
\usepackage{mathrsfs}
\usepackage{mathtools}
\usepackage{blkarray}
\usepackage{cleveref}
\usepackage{xcolor}
\usepackage{siunitx}
\newtheorem{definition}{Definition}
\newtheorem{theorem}{Theorem}
\newtheorem{lemma}{Lemma}

\newtheorem{proposition}{Proposition}
\newcommand{\AutoAdjust}[3]{\mathchoice{ \left #1 #2  \right #3}{#1 #2 #3}{#1 #2 #3}{#1 #2 #3} }
\newcommand{\Xcomment}[1]{{}}

\newcommand{\InBrackets}[1]{\AutoAdjust{[}{#1}{]}}
\newcommand{\Ex}[2][]{\operatorname{\mathbf E}_{#1}\InBrackets{#2}}
\newcommand{\Prx}[2][]{\operatorname{\mathbf{Pr}}_{#1}\InBrackets{#2}}


\DeclareMathOperator{\argmax}{argmax}
\newcommand{\noaccents}[1]{#1}
\newcommand{\newagentvar}[3][\noaccents]{%
\expandafter\newcommand\expandafter{\csname #2\endcsname}{#1{#3}}%
\expandafter\newcommand\expandafter{\csname #2s\endcsname}{#1{\boldsymbol{#3}}}%
\expandafter\newcommand\expandafter{\csname #2smi\endcsname}[1][i]{#1{\boldsymbol{#3}}_{-##1}}%
\expandafter\newcommand\expandafter{\csname #2sno\endcsname}[1][i]{#1{\boldsymbol{#3}}^{no}}%
\expandafter\newcommand\expandafter{\csname #2no\endcsname}[1][i]{#1{#3}^{no}_{##1}}%
\expandafter\newcommand\expandafter{\csname #2sbno\endcsname}[1][i]{#1{\boldsymbol{#3}}^{bno}}%
\expandafter\newcommand\expandafter{\csname #2bno\endcsname}[1][i]{#1{#3}^{bno}_{##1}}%
\expandafter\newcommand\expandafter{\csname #2i\endcsname}[1][i]{#1{#3}_{##1}}%
\expandafter\newcommand\expandafter{\csname #2j\endcsname}[1][i]{#1{#3}^{##1}}%
\expandafter\newcommand\expandafter{\csname #2ith\endcsname}[1][i]{#1{#3}_{(##1)}}%
}

\newagentvar{report}{b}
\newagentvar{type}{t}
\newagentvar[\bar]{dtype}{t}
\newagentvar{alloc}{x}
\newagentvar[\tilde]{extalloc}{x}
\newagentvar{detalloc}{\alloc^{\textup{det}}}
\newagentvar{randalloc}{\alloc^{\textup{rand}}}
\newagentvar{pay}{p}
\newagentvar[\tilde]{extpay}{p}
\newagentvar{typespace}{T}
\newagentvar{dettypespace}{\typespace^{\textup{det}}}
\newagentvar{randtypespace}{\typespace^{\textup{rand}}}

\newagentvar{map}{\phi}
\newagentvar{intpay}{\rho}
\newagentvar{vertype}{V}
\newagentvar{component}{\omega}

\newcommand{\fixedmenuset}{X_{V_2}}
\newcommand{\feasible}{\mathscr{F}}
\DeclareMathOperator{\supp}{supp}

\DeclareMathOperator{\convhull}{Conv}
\newagentvar{forced}{Force}

\newcommand{\zeroalloc}{\vec{\mathbf{0}}}
\newcommand{\allones}{\vec{\mathbf{1}}}

\newcommand{\singleonevec}{\mathcal{S}_{1}}
\newagentvar{val}{v}
\newagentvar{util}{u}
\newagentvar{payment}{p}
\DeclareMathOperator{\OPT}{OPT}
\DeclareMathOperator{\Conv}{Conv}

\newcommand{\swapfeasible}{\boldsymbol{x}^{\mathrm{ssf}}}

\newcommand{\old}[1]{}
\title{Limitations of Incentive Compatibility on Discrete Type Spaces}
\author{Submission Number: 9394}
\date{May 2019}
\author{Taylor Lundy and Hu Fu\\University of British Columbia\\tlundy@cs.ubc.ca, hufu@cs.ubc.ca\\}

\newcommand{\citet}[1]{\citeauthor{#1} \shortcite{#1}}
\newcommand{\citealp}[1]{\citeauthor{#1} \citeyear{#1}}
\begin{document}

\maketitle
\begin{abstract}  
	In the design of incentive compatible mechanisms, a common approach is to enforce incentive compatibility as constraints in programs that optimize over feasible mechanisms. Such constraints are often imposed on sparsified representations of the type spaces, such as their discretizations or samples, in order for the program to be manageable. In this work, we explore limitations of this approach, by studying whether all dominant strategy incentive compatible mechanisms on a set~$T$ of discrete types can be extended to the convex hull of~$T$.

Dobzinski, Fu and Kleinberg (2015) answered the question affirmatively for all settings where types are single dimensional. It is not difficult to show that the same holds when the set of feasible outcomes is downward closed. In this work we show that the question has a negative answer for certain non-downward-closed settings with multi-dimensional types. This result should call for caution in the use of the said approach to enforcing incentive compatibility beyond single-dimensional preferences and downward closed feasible outcomes.
\end{abstract}

\section{Introduction}
\label{sec:intro}
Mechanism design studies optimization problems with private inputs from strategic agents.  
An agent's input, known as her \emph{type}, is her private information including her valuation for the social outcomes, which are to be decided upon by the mechanism.  
A mechanism needs to solicit such information to achieve certain goals, e.g.\@ maximizing welfare, revenue, surplus or fairness measures.
It needs to provide its participants with correct incentives, via both social outcomes and payments, so that the agents find it in their best interests to reveal their types.  

Dominant strategy incentive compatibility (DSIC) is one of the strongest and most widely used solution concepts that guarantee such incentives.  Under DSIC, every participant, no matter what type she possesses and no matter what types the other participants report to the mechanism, will maximize her utility by revealing her true type.  
Not only is this a strong guarantee for the mechanism designer that true information should be reported and optimized over, it also alleviates the burden of strategizing from the participating agents --- telling truth is a dominant strategy regardless of the other agents' types or strategies.  
Partly thanks to this strong incentive guarantee, the two fundamental auctions, namely, the VCG auction that maximizes social welfare \cite{Vic61,Clarke71,Groves73}, and Myerson's auction that maximizes expected revenue for selling a single item \cite{Myerson81}, have been foundational in both the theory and practice of mechanism design.


As the scope of mechanism design expands beyond the classical settings, incentive compatible mechanisms that are optimal for various objective often lack the simple structures that characterize the VCG and Myerson's mechanisms.  
By and large, there have been two approaches to the design of incentive compatible mechanisms.
The first approach focuses on classes of mechanisms that, by their simple structures, have obvious incentive guarantees.  
For example, in a multi-item auction, a sequential pricing mechanism puts prices on items and asks each agent in turn to choose her favorite items that remain; the bidders are not asked about their values, and choosing utility-maximizing items (according to their true values) is the obvious strategy to adopt (see, e.g., \citealp{CHMS10}; \citealp{CMS10}; \citealp{FGL15}).  
Another example is to optimize over parameterized ``VCG-like'' mechanisms which inherit incentive properties from the VCG mechanism (e.g. \citealp{sandholm2015automated}).
This approach is often used to search for mechanisms whose performance is a factor away from being optimal, since the optimal mechanism or its very close approximations are often not within the class of mechanisms being searched over.

The second approach forgoes structures that are easily interpretable, and exhaustively searches for the optimal mechanism.  
This is exemplified by solving mathematical programs (typically linear or convex programs) whose feasible regions contain the set of all incentive compatible mechanisms (see, e.g. \citealp{conitzer2002complexity}; \citealp{DFK15}; \citealp{DDT17};\citealp{FH18}).  
Typically, incentive requirements are hardwired as constraints in such programs.  

Difficulty arises in the second approach when one would like to adopt strong incentive guarantees such as DSIC, which need at least one constraint per profile of types to specify.  When the space of possible types is a continuum, this gives rise to uncountably many constraints. While this does not always make the program impossible to solve it considerably complicates the task.  
One way to work around this is to discretize the type space and only impose incentive compatible (IC) constraints on the set of discrete types used to represent the type space.  
Discretization is also embodied in the idea of a given prior distribution over a set of discrete types, on which the optimization can then be based (e.g. \citealp{conitzer2002complexity}; \citealp{DFK15}).
The most common motivation for such prior distributions is that they naturally result from samples (e.g.\@ from past observations or market research) from an underlying distribution, whereas the true distribution may be supported on a continuum of types.
This approach motivates the question we study in this work.


\paragraph{Questions we study.}  In this work we aim to answer the question: when one has a mechanism that is DSIC on a discretized subset of a type space, can one always find a mechanism that has the same behavior on the subset and yet is DSIC on the whole type space?
To make the question more concrete, we study the natural case where the whole type space is the convex hull of the discrete subset.
To make the presentation easier, in the following we denote by~$\typespaces$ the discrete subset of types, and $\convhull(\typespaces)$ its convex hull.

We consider the question a fundamental one for the second approach to mechanism design that we described above.
When one optimizes for a mechanism with IC constraints imposed only on~$\typespaces$, if the resulting mechanism cannot be \emph{extended} to the original type space, it loses incentive guarantees when put to use.

One objection may be that, given a mechanism that is DSIC on~$\typespaces$, one may always run it on $\convhull(\typespaces)$, by restricting the ``bidding language'',  so that a type in $\convhull(\typespaces)$ but not in~$\typespaces$ has to report a type in~$\typespaces$.  
Such mechanisms, however, may lose the incentive guarantee which makes DSIC mechanisms attractive in the first place.
Unless one can show that agents with types not in~$\typespaces$ have a dominant strategy in such mechanisms, such agents need to strategize over which types in~$\typespaces$ to report, depending on the types and strategies of their opponents.
In the scenario where $\typespaces$ is a set of samples from a continuous distribution, the vast majority of types may not be in~$\typespaces$ and have no incentive guarantee, which is clearly undesirable.
In settings where the agents' types are single dimensional, \citet{DFK15} showed that ``restricting the bidding language'' does turn any mechanism DSIC on~$\typespaces$ into a mechanism DSIC on any superset of~$\typespaces$: each type in the superset has a dominant strategy, and by the revelation principle this gives rise to a DSIC mechanism that extends the given mechanism's behavior on~$\typespaces$.
To the best of our knowledge, no such guarantees are known beyond single dimensional settings.

\paragraph{Our Results.} 
For agents with multi-dimensional types, we first give a condition under which any DSIC mechanism on~$\typespaces$ can be extended to a DSIC mechanism on $\convhull(\typespaces)$, via an argument that is different from \citeauthor{DFK15}'s yet still straightforward (Theorem~\ref{thm:swap}).  In particular, the condition is satisfied whenever the set of feasible outcomes is \emph{downward closed} (Theorem~\ref{thm:downward}).  

Our main result, however, is a construction of a set $\typespaces$ of multi-dimensional types and a DSIC mechanism on it, for which we show that no DSIC mechanism on~$\convhull(\typespaces)$ can output the same social outcomes on types in~$\typespaces$.  
The impossibility result stands even if the extension mechanism is allowed to be randomized.  
This shows that, without conditions such as single-dimensional types or downward closed set of feasible outcomes, designing incentive compatible mechanisms by focusing on a discrete subset of types can be a questionable approach to designing mechanisms for the whole type space --- there may not be any mechanism DSIC on the whole type space which behave the same way on the subset.

Near the end, we give a multi-dimensional setting where the expected \emph{revenue} of a mechanism with only correct incentives for a set~$\typespace$ of types can be unboundedly more than the revenue of a mechanism for $\convhull(\typespace)$.
This example is much less involved than our main result, because revenue optimal mechanisms are meaningful only when they do not overcharge any reported type and guarantee non-negative utility.  This constraint can be much more stringent when imposed for all types in~$\convhull(\typespace)$ than for $\typespace$ only.
 
\subsection{Related Works}
\label{sec:related}

For multi-dimensional preferences, an allocation rule for any fixed agent is implementable if and only if it satisfies the so-called \emph{cyclic monotonicity} property \cite{Rochet87}.  When the type space is convex, it turns out the weaker condition of \emph{weak monotonicity} suffices for implementability \cite{SY05,AK14}.  It is notable that the two solution concepts we compare in Section~\ref{sec:mapping} precisely correspond to the case where the type space is convex and that where it is not.  However, nowhere in our arguments do we make use of this beautiful fact.

Another closely related body of work, is the literature on automated mechanism design. In automated mechanism design, mechanisms are optimized for the setting and objective automatically using information about agents' type distributions. When this work was introduced by \citet{conitzer2002complexity} the input for this problem was an explicit description of the agents distributions, however recent work has moved towards replacing this explicit description with samples from the agents type distribution \cite{likhodedov2004methods,likhodedov2005approximating,sandholm2015automated}. Our work highlights how interpolating between discrete samples can effect not only the objective but the implementablility of the mechanism itself. Luckily, the research on sample based automated mechanism design is able to avoid the pitfalls of only having discrete samples. They do this either by optimizing over parameterized families of mechanisms which are guaranteed to be implementable on the entire typespace or by working in settings where the addition of new types has no effect on the objective (i.e. downward closed settings, see \Cref{thm:downward})\cite{sandholm2015automated,guo2010computationally,balcan2018,morgenstern2016learning}. 
However, our work points out some difficulties that might arise if one wishes to take a more general, non-parameterized approach to automated mechanism design in settings which are not downward closed.

 There are a variety of well-studied settings that are not downward closed in which extending a type space to its the convex hull could cause problems.  One commonly studied not downward closed setting arises from the job scheduling problem introduced in \citet{nisan2001algorithmic} and later built upon by \citet{schedule2} and \citet{ashlagi2012optimal}. since in this problem every job must eventually be scheduled and the set of feasible solutions is not downward closed.
 Another example of a non-downward closed setting  is one-sided matching markets in which every agent must be matched with exactly one good. An example of a one-sided matching market is the fair housing allocation studied by \citet{matchmarket}. Finally, the facility location problem from \citet{devanur2005strategyproof} also not downward closed.

\section{Preliminaries}
\label{sec:prelim}
We consider a setting with $N$~agents where each agent~$i$ has a private type $\typei$ from her type space $\typespacei \subseteq \mathbb R_+^m$.
The type profile $\types = (\typei[1], \ldots, \typei[N])$ denotes the vector of all agents' types, from the joint type space  $\typespaces \coloneqq \prod_i \typespacei$.

We adopt the standard shorthand notation to write $\typesmi \coloneqq (\type_{1}, \ldots, \type_{i-1}, \type_{i+1}, \ldots, \type_{n})$ from $\typespacesmi \coloneqq \prod_{j \neq i} \typespace_{j}$.

An outcome (or, interchangeably, allocation) for agent~$i$ lies in $\mathbb R_+^m$; for an outcome $\alloci$, the agent with type~$\typei$ has value $\langle \typei, \alloci \rangle = \sum_{j = 1}^m \type_{ij} \alloc_{ij}$.  
A social outcome is denoted by a vector $(\alloci[1], \ldots, \alloci[N]) \in \mathbb R_+^{mN}$.  
The set of all feasible social outcomes (or allocations) is denoted $\feasible \subseteq \mathbb R_+^{mN}$.

For example, in a single-item auction, $m = 1$, each $\typei \in \mathbb R_+$ represents agent~$i$'s value for the item, and $\feasible \subseteq \mathbb R_+^N$ is the all zero vector (representing not selling) and the $N$ standard bases (each representing selling to a corresponding agent).   
As another example, in a $m$-unit auction with unit-demand buyers, $\feasible \subseteq \mathbb R_+^{N}$ is the set of all integral points in $\{(\alloci[1], \ldots, \alloci[N]) \in \mathbb R_+^{mN} \mid \sum_{i = 1}^N \alloci[ij] \leq 1, j = 1, 2, \ldots, m\}$.

\paragraph{Mechanisms.}
A (direct revelation) mechanism consists of an \emph{allocation rule} $\allocs: \typespaces \to \feasible$ and a \emph{payment rule} $\pays: \typespaces \to \mathbb R_+^N$.  
The mechanism elicits type reports from the agents, and on 
reported type profile $\types$, decides on an allocation $\allocs(\types) \in \feasible$, with each agent~$i$ making a payment of $\payi(\types)$.  
In general, allocation rules can be randomized, in which case $\allocs(\types)$ is a randomized variable supported on $\feasible$.
$\allocs(\cdot)$ induces allocation rule $\alloci(\cdot)$ for each agent~$i$: for all $\types$, $\alloci(\types) \in \mathbb R_+^m$ is the vector consisting of the $[(i-1)m + 1]$-st to the $im$-th coordinates in $\allocs(\types)$.
When $\allocs(\types)$ is a random variable, so are $\alloci(\types)$'s. 

When $\allocs(\types)$ is deterministic, we write $\allocs(\types) = \mathbf{y} \in \feasible$ as a shorthand for $\Prx{\allocs(\types) = \mathbf y} = 1$.

Agents have quasi-linear utilities, that is, when reporting type~$\typei'$, agent~$i$'s utility is $\Ex{\langle \typei, \alloci(\typei', \typesmi) \rangle } - \payi(\typei', \typesmi)$ (where the expectation is taken over the randomness in $\allocs(\types)$.

A mechanism is \emph{dominant strategy incentive compatible} (DSIC) if, for all $\types \in \typespaces$, and for all $\typei' \in \typespacei$, $\Ex{\langle \typei, \alloc_{i}(\typei, \typesmi) \rangle} - \payi(\typei, \typesmi) \geq \Ex{\langle \typei,  \alloc_{i}(\typei', \typesmi) \rangle} - \payi(\typei', \typesmi)$.  
An allocation rule~$\allocs$ is said to be DSIC implementable or simply DSIC if there is a payment rule~$\pays$ such that $(\allocs, \pays)$ is a DSIC mechanism.  In this case, we say $\allocs$ is implemented by payment rule~$\pays$.

\paragraph{Extensions.}

Given a subset $S \subseteq \mathbb R^n$, we denote by $\Conv(S)$ the convex hull of~$S$. 
\begin{definition}
    \label{def:extension}
    An allocation rule $\extallocs: \convhull(\typespaces) \to \feasible$ is an \emph{extension} of an allocation rule $\allocs: \typespaces \to \feasible$ if for all $\types \in \typespaces$, $\extallocs(\types)$ has the same distribution as $\allocs(\types)$.  
    Similarly, a payment rule~$\extpays: \convhull(\typespaces) \to \mathbb R_+^N$ is an extension of payment rule~$\pays: \typespaces \to \mathbb R_+^N$ if for all $\types \in \typespaces$, $\extpays(\types) = \pays(\types)$.
\end{definition}

    In Definition~\ref{def:extension}, if $\allocs(\cdot)$ is deterministic, then $\extallocs(\cdot)$ being an extension simply means $\extallocs(\types) = \allocs(\types)$ for all $\types \in \typespaces$.
    
\paragraph{Downward closed settings.}  The feasible allocation set $\feasible$ is \emph{downward closed} if $\mathbf y \in \feasible$ entails $\allocs \in \feasible$ for all $\allocs \preceq \mathbf y$, where $\allocs \preceq \mathbf y$ denotes $\alloci[j] \leq y_j$ for $j = 1, \cdots, mN$.

\paragraph{Weak monotonicity.} 
A well-known necessary condition for an allocation rule to be DSIC implementable is weak monotonicity:

\begin{definition}
	An allocation rule $\allocs: \typespaces \to \feasible$ is \emph{weakly monotone} if for each agent~$i$, any $\typei, \typei' \in \typespacei$ and $\typesmi \in \typespacesmi$,
	\begin{align*}
		\Ex{\langle \typei - \typei', \alloci(\typei,\typesmi)  - \alloci(\typei', \typesmi) \rangle } \geq 0. 
	\end{align*}
\end{definition}
\begin{proposition}[see, e.g., \citeauthor{SY05} \citeyear{SY05}]
An allocation rule is implementable only if it is weakly monotone.
\end{proposition}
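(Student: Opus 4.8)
The plan is to use the definition of implementability directly. If $\allocs$ is DSIC implementable, then by definition there exists a payment rule $\pays$ such that $(\allocs, \pays)$ satisfies the DSIC inequalities for every type profile and every deviation. The entire argument can be carried out after fixing an arbitrary agent $i$, a profile $\typesmi \in \typespacesmi$ of the other agents' types, and two candidate types $\typei, \typei' \in \typespacei$ for agent $i$; I would then treat $\alloci(\cdot, \typesmi)$ and $\payi(\cdot, \typesmi)$ as functions of agent $i$'s report alone.

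The heart of the argument is to instantiate the DSIC constraint twice. The first instance takes $\typei$ as agent $i$'s true type and $\typei'$ as the deviation, giving $\Ex{\langle \typei, \alloci(\typei, \typesmi)\rangle} - \payi(\typei, \typesmi) \geq \Ex{\langle \typei, \alloci(\typei', \typesmi)\rangle} - \payi(\typei', \typesmi)$; the second instance swaps the roles, taking $\typei'$ as the true type and $\typei$ as the deviation. I would then add these two inequalities. The key feature is that the payment terms $\payi(\typei, \typesmi)$ and $\payi(\typei', \typesmi)$ appear on both sides and cancel, leaving an inequality involving only allocation values. Rearranging the surviving terms and invoking linearity of expectation together with the bilinearity of $\langle \cdot, \cdot \rangle$ collapses the expression into exactly $\Ex{\langle \typei - \typei', \alloci(\typei, \typesmi) - \alloci(\typei', \typesmi)\rangle} \geq 0$, which is the weak monotonicity condition. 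Since $i$, $\typesmi$, $\typei$, and $\typei'$ were arbitrary, this establishes the claim.

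I do not anticipate any genuine obstacle in this proof; it is the standard ``add the two incentive constraints'' cancellation argument. The only points meriting a moment's care are bookkeeping ones: ensuring that the two payment terms are literally the same quantities in both constraints so that they cancel cleanly, and that linearity of expectation is applied correctly so that the two separate allocation expectations can be merged into a single expectation of an inner product of differences.
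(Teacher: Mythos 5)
Your proof is correct and is exactly the standard argument: the paper itself states this proposition without proof, deferring to the cited literature (Saks and Yu), where the same ``add the two incentive constraints so the payments cancel'' derivation appears. The bookkeeping you flag (identical payment terms on both sides, linearity of expectation and bilinearity of the inner product) goes through without issue, so nothing further is needed.
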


In fact, \citet{SY05} showed that, if $\typespaces$ is convex, then weak monotonicity is also a sufficient condition for DSIC implementability.

\paragraph{Revenue.}
A mechanism $(\allocs, \pays)$ is ex post \emph{individually rational} (IR)
if for each agent~$i$ and for every $\types \in \typespaces$, $\langle \typei, \alloc_{i}(\types) \rangle - \payi(\types) \geq 0$.

Given a distribution~$D$ on $\typespaces$ and a mechanism $(\allocs, \pays)$ that is DSIC and ex post IR, the \emph{expected revenue} of the mechanism is $\Ex[\types \sim D]{\sum_i \payi(\types)}$.  The \emph{optimal} revenue is the maximum expected revenue achievable among all DSIC, ex post IR mechanisms.

\section{DSIC Convex Extensions}
\label{sec:map}
\label{sec:mapping}
Before presenting our main result on the impossibility to extend DSIC allocation rules, we first complement \citet{DFK15}'s result in single-dimensional setting with a simple observation in multi-dimensional preference settings: whenever the feasible allocation space is downward closed, any DSIC allocation rule on a type space can be extended to its convex hull by another DSIC allocation rule.

\begin{theorem}
\label{thm:downward}
If the set of feasible allocations $\feasible$ is downward closed, for any DSIC allocation rule $\allocs$ on a type space~$\typespaces$, there is a DSIC extension~$\extallocs$ of~$\allocs$ on $\convhull(\typespaces)$.  If $\allocs$ is implemented with a payment rule~$\pays$, $\extallocs$ can be implemented by an extension $\extpays$ of~$\pays$.  If $\pays$ is individually rational on~$\typespaces$, so is $\extpays$ on $\convhull(\typespaces)$.
\end{theorem}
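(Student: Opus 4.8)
The plan is to extend $\allocs$ by having each agent best-respond to a \emph{menu} that is a function of the other agents' reports only, and then to use downward-closedness of $\feasible$ to guarantee that the independently chosen allocations assemble into a feasible profile. First I would invoke the taxation principle: DSIC of $\allocs$ on $\typespaces$ means that for each agent~$i$ and each fixed $\typesmi$, the collection $\menu_i(\typesmi) = \{(\alloci(\reporti, \typesmi), \payi(\reporti, \typesmi)) : \reporti \in \typespacei\}$ is a menu for which truthful reporting selects a utility-maximizing item. I would then define $\extallocs$ and $\extpays$ at a convex-hull type $\typei \in \convhull(\typespacei)$ to be the item of $\menu_i(\typesmi)$ maximizing $\langle \typei, \alloci \rangle - \payi$, with the payment read off the same item (breaking ties toward original items). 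Three of the required properties are then immediate. Each agent's own DSIC holds because her allocation depends on her report only through a choice made from a menu that does not depend on that report. The extension property holds because for $\typei \in \typespacei$ the favorite item is the truthful one, so $\extallocs = \allocs$ and $\extpays = \pays$ on $\typespaces$. Individual rationality is inherited by augmenting each menu with the option $(\zeroalloc, 0)$, which is feasible since downward-closedness forces $\zeroalloc \in \feasible$; every chosen item then yields utility at least~$0$, and the tie-breaking rule keeps the extension equality intact on $\typespaces$.

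The main obstacle is joint feasibility once several agents simultaneously hold convex-hull types: the profile $(\extalloci[1](\types), \ldots, \extalloci[N](\types))$ glues together each agent's independently chosen allocation, and a priori this combination need not lie in $\feasible$. This is exactly where downward-closedness is indispensable, via the following \emph{domination lemma}: I would construct the menus so that the assembled extended profile is coordinatewise dominated by an allocation $\allocs(\reports)$ of the original mechanism for some $\reports \in \typespaces$, namely $\extalloci(\types) \preceq \alloci(\reports)$ for every~$i$; downward-closedness then forces $\extallocs(\types) \in \feasible$. To make this precise I would proceed inductively, extending the domain of one agent at a time from $\typespacei$ to $\convhull(\typespacei)$ while maintaining the invariant that the current rule is feasible, DSIC, and extends $\allocs$, and that each agent's allocation is still obtained as a favorite item of a menu depending only on the \emph{other} agents' reports. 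The delicate point is defining $\menu_i(\typesmi)$ when $\typesmi$ already carries convex-hull coordinates and checking that the domination bound persists, so that feasibility is preserved at each stage.

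Assembling the pieces, the resulting $(\extallocs, \extpays)$ is feasible by the domination-plus-downward-closed argument, DSIC because each agent optimizes over a menu independent of her own report, an extension because convex-hull types collapse to truthful choices on $\typespaces$, and individually rational by the zero-option comparison. I expect the feasibility step --- maintaining menus as functions of the opponents' reports through the inductive extension while preserving the coordinatewise domination by a genuine feasible allocation --- to be the crux, whereas the incentive, extension, and IR claims follow routinely from the menu formulation. (One could instead obtain implementability abstractly from weak monotonicity and \citet{SY05}, since $\convhull(\typespaces)$ is convex, but that route does not directly deliver the prescribed payment extension or individual rationality, which is why I prefer the explicit menu construction.)
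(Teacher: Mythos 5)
Your high-level framework (taxation principle, a zero option for IR, and the recognition that joint feasibility of the glued profile is the obstacle) is the right starting point, but the proposal is missing the single idea that makes the theorem go through, and it defers exactly that point (``the delicate point is defining $\menu_i(\typesmi)$ when $\typesmi$ already carries convex-hull coordinates'') to an inductive argument that cannot be completed as described. The paper's construction is: if \emph{exactly one} agent~$i$ reports a type in $\convhull(\typespacei)\setminus\typespacei$, she receives the outcome of her favorite menu item $\typei^{\text{max}} = \argmax_{\typei'\in\typespacei}\langle\typei,\alloci(\typei',\typesmi)\rangle - \payi(\typei',\typesmi)$ (or $\zeroalloc$ at price~$0$ if that utility is negative), while \emph{every other agent receives $\zeroalloc$ and pays $0$}; and if two or more agents report new types, \emph{every} agent receives $\zeroalloc$ and pays~$0$. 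This settles feasibility at once (the realized profile is coordinatewise below $\allocs(\typei^{\text{max}},\typesmi)\in\feasible$, so downward closedness applies), and---crucially---it settles the co-players' incentives trivially: whenever some opponent reports a new type, an agent's allocation and payment are $\zeroalloc$ and $0$ no matter what she reports, so she is indifferent among all reports. Your proposal never specifies what the agents other than the deviator receive, even in the one-new-type case, and that specification is where both feasibility and their incentive constraints live.

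The route you sketch in its place---give every agent a meaningful menu choice at every profile and maintain, through an agent-by-agent induction, that the glued profile is dominated by a single $\allocs(\reports)$ with $\reports\in\typespaces$---runs into an interdependence problem that your proposal classifies as routine. For DSIC, agent~$j$'s menu at a profile where agent~$1$ reports a new type must not depend on agent~$j$'s own report. You cannot build that menu from the ``effective'' profile $(\reporti[1]^{*},\cdot)$ where $\reporti[1]^{*}$ is agent~$1$'s menu choice, because agent~$1$'s menu is indexed by \emph{her} opponents' reports, so $\reporti[1]^{*}$ moves when agent~$j$ deviates; the mechanism is then not a fixed-menu mechanism for agent~$j$, and the original DSIC constraints (which fix all opponents' reports) do not transfer. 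If instead you substitute agent~$1$'s new type by some old type chosen independently of agent~$j$'s report, then agent~$j$'s chosen allocation is a coordinate of $\allocs$ evaluated at one profile while agent~$1$'s is a coordinate of $\allocs$ evaluated at a different profile, and there is in general no single $\reports\in\typespaces$ whose allocation dominates both, so the domination invariant fails. The only way to satisfy menu-independence and domination simultaneously is to collapse the co-players' menus to the singleton $\{(\zeroalloc,0)\}$---which is precisely the paper's construction. So the crux is not feasibility bookkeeping with incentives following routinely; feasibility and incentives are entangled, and the resolution is to sacrifice the co-players' allocations outright.
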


If we do not require the statement about individual rationality, extensibility is guaranteed by an even weaker condition, which we call \emph{single swap feasible}.

\begin{definition}[Single swap feasible] A feasible allocation set $\feasible$ is \emph{single swap feasible} (SSF) if for every agent~$i$ there exists an allocation $\swapfeasible(i) \in \feasible$ such that  for any $\allocs' \in \feasible$, $(\alloci',\swapfeasible_{-i}(i)) \in \feasible$. 
\end{definition}
Intuitively, 
$\swapfeasible(i)$ is a feasible allocation vector such that if we replace the $i^{\text{th}}$ element of this vector with the $i^{\text{th}}$ element from any other feasible allocation the resulting allocation is still feasible.  
If $\feasible$ is a product space or is downward closed, it must be SSF.  
\footnote{To see that a downward closed $\feasible$ is SSF, observe that we can let $\swapfeasible(i)$ be the all zero vector for each agent~$i$.}
\begin{theorem}
\label{thm:swap}
If the set of feasible allocations $\feasible$ is SSF, 
 any DSIC allocation rule $\allocs$ on a type space~$\typespaces$, there is a DSIC extension~$\extallocs$ of~$\allocs$ on $\convhull(\typespaces)$.
\end{theorem}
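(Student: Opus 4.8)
The plan is to build the extension $\extallocs$ by induction on the agents, converting one agent's type space from $\typespacei$ to $\convhull(\typespacei)$ at a time while maintaining three invariants: the current allocation rule is (i)~feasible, (ii)~DSIC, and (iii)~in agreement with $\allocs$ on the original profiles $\typespaces$. The base case is $\allocs$ on $\typespaces$ itself, and after the $N$-th step the rule is defined on $\prod_i \convhull(\typespacei) = \convhull(\typespaces)$ and is the desired extension. Reducing to a one-agent-at-a-time argument is what lets me use the \emph{single}-swap property of SSF rather than any product structure.

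For the inductive step, suppose the rule has already been extended so that agents $1,\dots,k$ report from their convex hulls while agents $k+1,\dots,N$ still report discrete types, and I now want to let agent $k+1$ report any $\typei[k+1]\in\convhull(\typespacei[k+1])$. On each newly added profile (those where agent $k+1$ reports a type outside $\typespacei[k+1]$) I would do the following. Fix the others' reports $\typesmi[(k+1)]$; present agent $k+1$ the menu of allocation/payment pairs that her \emph{discrete} reports would have produced against $\typesmi[(k+1)]$ under the previous step's rule, and assign her the utility-maximizing entry. By the taxation principle this keeps agent $k+1$ DSIC over all of $\convhull(\typespacei[k+1])$, and because the previous rule was DSIC, each discrete type of agent $k+1$ still selects its own entry, so invariant~(iii) survives. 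Simultaneously I would give every other agent the fixed component $\swapfeasible_{-(k+1)}(k+1)$ coming from the SSF base $\swapfeasible(k+1)$; since each other agent's allocation is then constant in her own report, her incentive constraints hold trivially (pay her a constant), maintaining~(ii) for everyone.

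The only place feasibility could break, and the only place the hypothesis is used, is in checking that this combined outcome lies in $\feasible$. Here SSF is exactly tailored: agent $k+1$'s selected component is the $(k+1)$-st component of some feasible allocation output by the previous step's rule, and the SSF condition says that swapping any such component into $\swapfeasible(k+1)$ leaves the result in $\feasible$. This single swap is all the step needs, which is precisely why SSF, weaker than a product structure, suffices.

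I expect the main obstacle to be not the within-step bookkeeping but the behavior across the discrete/convex boundary: I must verify that a discrete true type of agent $k+1$ still weakly prefers truthful reporting to deviating into the newly added convex region (and conversely). Both follow from the menu construction, since for a fixed $\typesmi[(k+1)]$ every type of agent $k+1$ faces one common menu and receives its favorite entry. The one subtlety to flag is that on profiles where several agents already report convex types the outcome may collapse to $\swapfeasible(k+1)$ for all agents; this is harmless, as such profiles lie outside $\typespaces$ and carry no agreement constraint, and every agent there faces either a trivial menu or a report-independent allocation, so DSIC is retained.
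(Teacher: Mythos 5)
Your proof is correct, and it rests on the same two pillars as the paper's own argument: the agent reporting a ``new'' type is handed the menu of allocation/payment pairs her discrete reports would have produced (taxation principle), every other agent is frozen at her component of the swap base $\swapfeasible(\cdot)$ with a constant payment, and SSF is invoked to certify that swapping a single component into the base keeps the outcome in $\feasible$. The difference is the decomposition. The paper defines the extension on all of $\convhull(\typespaces)$ in one shot, with an explicit case analysis on how many agents report types outside $\typespaces$; in particular, for two or more new reports it must separately invent the allocation $(\swapfeasible_{j}(k),\swapfeasible_{-j}(j))$, where $j$ and $k$ are the smallest and second-smallest indices of new-type agents, and verify its feasibility and incentives. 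Your induction over agents makes that case vanish as a separate construction: when agent $k+1$ reports a new type and some earlier agent already reports a convex type, agent $k+1$'s menu collapses to a singleton (the previous rule already gives her a report-independent swap component against such opponents), so the multi-new-type allocation and its feasibility fall out of the single-new-type step; unrolled, your mechanism is the paper's with ``largest/second-largest'' index in place of ``smallest/second-smallest.'' What the one-shot construction buys is explicitness, symmetry of treatment, and no invariants to carry through $N$ stages; what your induction buys is that only the one-new-agent case ever has to be designed. One imprecision, harmless: on profiles with several convex reports the outcome does not collapse to $\swapfeasible(k+1)$ for \emph{every} agent --- agent $k+1$ herself receives the $(k+1)$-st component of an earlier swap base, so the outcome is $\swapfeasible(k+1)$ with that one component swapped, which is feasible again exactly by SSF; also, ``each discrete type still selects its own entry'' needs ties broken in favor of the original outcome, which your convention of redefining only the newly added profiles already enforces.
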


The proofs for both \Cref{thm:downward} and \Cref{thm:swap} can be found in the supplementary materials.
The main result of this paper is that without this condition a DSIC extension may not exist.

\begin{theorem}
    \label{thm:rand}
There is a two agent type space~$\randtypespaces$ with a DSIC allocation rule~$\randallocs$, such that $\randallocs$ cannot be extended by a DSIC allocation rule to $\convhull(\randtypespaces)$.
\end{theorem}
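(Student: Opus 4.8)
The plan is to reduce the (possibly randomized) extension question to the non-existence of a weakly monotone map into $\convhull(\feasible)$, and then to exhibit a small non-downward-closed gadget for which such a map provably cannot exist.

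First I would record two reductions. Since $\randtypespaces = \randtypespace_1 \times \randtypespace_2$ is a product, $\convhull(\randtypespaces) = \convhull(\randtypespace_1)\times\convhull(\randtypespace_2)$, so on the convex hull the two agents' reports range over $\convhull(\randtypespace_1)$ and $\convhull(\randtypespace_2)$ independently. Next, only expected allocations matter for incentives: for a randomized extension $\extallocs$, write $\hat{\allocs}(\types) = \Ex{\extallocs(\types)}$. By the weak-monotonicity proposition, DSIC forces that for each fixed $t_2 \in \convhull(\randtypespace_2)$ the map $t_1 \mapsto \hat{\allocs}_1(t_1,t_2)$ is weakly monotone on $\convhull(\randtypespace_1)$, and symmetrically for agent $2$; moreover $\hat{\allocs}(\types) \in \convhull(\feasible)$ for all $\types$, and $\hat{\allocs}$ agrees with the expected allocation of $\randallocs$ on $\randtypespaces$ (equal distributions imply equal means). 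Hence it suffices to prove that no weakly monotone $\hat{\allocs}: \convhull(\randtypespaces) \to \convhull(\feasible)$ extends $\randallocs$; this simultaneously rules out randomized extensions, and it uses only the necessity of weak monotonicity, not the Saks--Yu converse.

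Second, the gadget. I would take $\feasible$ to be a non-downward-closed (hence, by \Cref{thm:swap}, non-SSF) feasible set that \emph{couples} the two agents --- for instance a must-allocate constraint forcing agent $2$'s allocation to be complementary to agent $1$'s. Under such coupling the single shared allocation variable is pulled in opposite directions by agent $1$'s required monotonicity in $t_1$ and agent $2$'s required monotonicity in $t_2$. I would then choose a small multi-dimensional $\randtypespaces = \randtypespace_1\times\randtypespace_2$ with a handful of types per agent and a rule $\randallocs$ that is DSIC on $\randtypespaces$; checking DSIC on the finite set reduces to finitely many pairwise weak-monotonicity inequalities (each fixed-opponent subproblem being implementable), which can be arranged to hold while the boundary allocation values are pinned to extreme points of $\feasible$.

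Third, the contradiction. The crucial point is that a DSIC extension must stay weakly monotone even when the opponent reports an \emph{interior} (convex-combination) type, which never appears in $\randtypespaces$ itself. I would fix such a $\bar t_2 \in \convhull(\randtypespace_2)\setminus\randtypespace_2$ and track $\hat{\allocs}_1(\cdot, \bar t_2)$ along an agent-$1$ segment: weak monotonicity forces a projection of this allocation to move monotonically, the complementarity constraint turns this into a forced motion of $\hat{\allocs}_2$, and agent $2$'s weak monotonicity along an agent-$2$ segment through an interior agent-$1$ type then contradicts it once the boundary values fixed by $\randallocs$ and the requirement $\hat{\allocs}(\types)\in\convhull(\feasible)$ are imposed. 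Packaging these as a cyclic-monotonicity violation around a quadrilateral of interior types, or certifying infeasibility of the resulting linear system by a Farkas/LP-duality argument, yields the impossibility. The hard part will be engineering $\feasible$, $\randtypespaces$, and $\randallocs$ so that the two agents' monotonicity demands are irreconcilable by \emph{any} feasible interpolation: with too simple a feasible set or too few types a weakly monotone extension always exists, so the construction must use a region whose shape --- even after passing to $\convhull(\feasible)$, which is exactly what randomization buys the adversary --- forces every monotone interpolation through allocations outside $\convhull(\feasible)$. Producing an explicit such region together with a finite certificate of infeasibility is where essentially all the technical effort lies.
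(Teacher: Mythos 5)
Your opening reduction is sound: utilities depend only on expected allocations, so any randomized DSIC extension induces a weakly monotone map $\hat{\allocs}\colon \convhull(\randtypespaces)\to\convhull(\feasible)$ that agrees with $\randallocs$ on $\randtypespaces$, and ruling out such maps rules out randomized extensions (indeed, by Saks--Yu this target is equivalent to the theorem, since $\convhull(\randtypespaces)$ is convex). But the proof stops exactly where the theorem begins. The statement is existential --- it asserts that a specific $\randtypespaces$ and $\randallocs$ admitting no DSIC extension \emph{exist} --- and your second and third paragraphs only list properties you would like such a construction to have (``couples the two agents'', ``can be arranged to hold''), while explicitly deferring the construction and its infeasibility certificate as ``where essentially all the technical effort lies.'' That deferred part \emph{is} the paper's proof: a three-dimensional gadget with four types per agent whose allocation values are arranged so that DSIC constraints \emph{lock} all payments to be equal across reports (Lemma~\ref{lem:equal-pay}), which pins down what any extension must offer the interior type $\tfrac13 A+\tfrac13 B+\tfrac13 D$ (Lemma~\ref{lem:fixed-menu}) and yields a two-cycle weak-monotonicity violation for agent~2; and then, to defeat randomization, a lift to seven dimensions with extra types $E',F',G',H'$ having value $100$ for every out-of-menu allocation and $0$ for every in-menu one, so that any extension placing probability $\beta>0$ outside the menu triggers a deviation forcing $3\beta\ge 100\beta$ (Lemma~\ref{lma:invariant-random}). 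Nothing in your proposal plays the role of either device, and without them there is no proof.

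One further error worth flagging: you claim that verifying DSIC of $\randallocs$ on the finite set ``reduces to finitely many pairwise weak-monotonicity inequalities.'' On a \emph{discrete} (non-convex) type space, weak monotonicity is necessary but not sufficient for implementability --- that is exactly the Rochet (cyclic monotonicity) versus Saks--Yu distinction, and it is the gap this whole paper exploits. To certify DSIC on the discrete space you must either check cyclic monotonicity or exhibit payments; the paper does the latter (the all-zero payment rule implements its rule). Relatedly, your payment-free plan likely understates the final step: the paper's impossibility argument first pins down the extension at interior types via payment-locking (equivalently, via cycles longer than two) and only then exhibits a two-cycle violation; a certificate built purely from pairwise weak monotonicity at interior types is not shown to suffice, which is consistent with your own fallback to cyclic-monotonicity or Farkas certificates --- certificates you would still have to produce.
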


We prove the theorem in two steps.  We first present a setting with three-dimensional preferences for which we show the non-existence of \emph{deterministic} extensions.  We then build on the construction, lifting it to a higher dimension, where we strengthen the argument and show the non-existence of extensions that even allow randomization.

\subsection{Non-existence of deterministic extensions}

We first present type space $\dettypespaces = \dettypespacei[1] \times \dettypespacei[2]$ and the allocation rule $\detallocs$, and then show that $\detallocs$ is DSIC and yet cannot be extended by any deterministic DSIC allocation rule on $\convhull(\dettypespaces)$.

The two agents have identical type spaces: for $i = 1, 2$, $\dettypespacei = \dettypespace \coloneqq \{A = [1, 0, 0], B = [0, 1, 0], C = [0, 0, 1], D = [\tfrac 1 3, \tfrac 1 3, \tfrac 1 3]\}$. 
A visual representation of this typespace and its convex hull can be found in figure 1 in the supplementary materials.

The allocation rule $\detallocs$ is also symmetric, in the sense that 
$\detalloci[1](\typei[1], \typei[2]) = \detalloci[2](\typei[2], \typei[1])$ for any $\typei[1], \typei[2] \in \dettypespace$.  
We summarize $\detalloci[1]$ with the diagram below.  
The rows are indexed by agent~$1$'s own type~$V_1$, and the columns by agent~$2$'s type~$V_2$:
\begin{align*}
\begin{blockarray}{c cccc}
 & A & B & C & D \\
\begin{block}{c(cccc)}
A & [1,1,0] & [2,0,2] & [3,0,3] & [4,0,4] \\ 
B & [0,1,1] & [2,2,0] & [3,3,0] & [4,4,0] \\ 
C & [1,0,1] & [0,2,2] & [0,3,3] & [0,4,4] \\ 
D & [0,1,1] & [2,2,0] & [3,3,0] & [4,4,0]  \\
\end{block}
\end{blockarray} 
\end{align*}

The set of all feasible allocations is then $\feasible = \{(\detalloci[1](V_1, V_2), \detalloci[2](V_1, V_2) )\}_{V_1, V_2 \in \dettypespace}$.  
We hasten to point out that $\feasible$ is \emph{not} the product between the two agents' respective set of feasible allocations.  
For example, $[1, 1, 0, 1, 1, 0]$ is in~$\feasible$ as it is $\detallocs(A, A)$, but $[1, 1, 0, 0, 1, 1]$ is not.
This is important for the proof.

\begin{proposition}
$\detallocs$ is DSIC implementable.
\end{proposition}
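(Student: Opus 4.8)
The plan is to exhibit an explicit payment rule under which $\detallocs$ satisfies every dominant-strategy incentive constraint, and the simplest candidate to try is the identically zero payment rule $\pays \equiv 0$ (indeed, any rule constant in each agent's own report would do). The first reduction I would make is that the DSIC condition decomposes over the opponent's report: fixing agent~2's report $V_2$, agent~1's incentive constraints involve only the four allocations $\{\detalloci[1](V_1, V_2)\}_{V_1 \in \dettypespace}$ appearing in column~$V_2$ of the diagram, and a payment for agent~1 may in principle depend on $V_2$. By the symmetry $\detalloci[1](\typei[1], \typei[2]) = \detalloci[2](\typei[2], \typei[1])$, it then suffices to verify agent~1's constraints for each of the four columns.

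Next I would record two structural features of each column that make zero payments work. First, within any single column the distinct allocation vectors (note that $D$'s row always coincides with $B$'s, so there are really only three per column) all share the same coordinate-sum --- namely $2, 4, 6, 8$ in columns $A, B, C, D$ respectively --- so the uniform type $D = [\tfrac13,\tfrac13,\tfrac13]$ obtains exactly the same value $\langle D, \cdot\rangle$ at every allocation in the column and is thus indifferent among all reports. Second, for each basis type $A = e_1$, $B = e_2$, $C = e_3$, the value $\langle e_j, \mathbf{x}\rangle$ is simply the $j$-th coordinate of~$\mathbf{x}$, and a direct inspection shows that the allocation assigned to the basis type attains the maximum $j$-th coordinate among the column's allocations, with ties allowed. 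Consequently, with all payments set to zero, every type weakly prefers its own allocation to every other, which is precisely the DSIC condition since the incentive inequality is non-strict.

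The verification therefore reduces to a finite table of inner products --- four columns, four types each --- and I expect no genuine obstacle beyond this routine check; the ties (for instance, type $A$ being indifferent between its own allocation and $C$'s in column~$A$) are harmless because DSIC only demands weak preference. If one wanted a payment-free certificate instead, one could confirm that $\detallocs$ is weakly monotone and then appeal to a known implementability characterization, but directly displaying the zero payment rule and checking the menu in each column is the most transparent route and the one I would write up.
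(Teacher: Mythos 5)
Your proposal is correct and takes essentially the same route as the paper's proof: zero payments, decomposition by the opponent's reported column, the observation that each basis type attains the maximum relevant coordinate by truthful reporting (ties being harmless), and that type $D$ is indifferent among all allocations in a column. Your constant coordinate-sum observation is merely a tidier certificate of $D$'s indifference than the paper's direct statement that all four allocations give value $\tfrac{2}{3}$ in column $A$; there is no substantive difference.
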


\begin{proof}
Let the payment be \num 0 for both agents and all type profiles.  
As the allocation and payment rules are both symmetric, consider either agent~$i$.  If $\typei[-i] = A$, the maximum value agent~$i$ could get, when her type is $A$, $B$, or~$C$, is \num 1, attained with truthful bidding.  For $\typei = D$, the four allocations all give the same value $\tfrac 2 3$.  Similar arguments hold when $\typei[-i]$ is $B$, $C$ or~$D$.
\end{proof}

\begin{theorem}
    \label{thm:det}
    There exists no deterministic DSIC extension of~$\detallocs$.
\end{theorem}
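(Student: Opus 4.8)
The plan is to assume, for contradiction, that a deterministic DSIC extension $\extallocs$ of $\detallocs$ to $\convhull(\dettypespaces)$ exists, and to exhibit a violation of weak monotonicity, which by the Proposition above is necessary for DSIC implementability. Every step leans on the fact, emphasized in the construction, that $\feasible$ is \emph{not} a product set: once one agent's allocation is fixed, feasibility confines the partner's allocation to a very small (often unique) set of completions. The idea is to use the pinned values of $\detallocs$ at the vertex profiles to force the partner's allocation at certain \emph{interior} reports, and then to catch agent~$1$'s own allocation rule being non-monotone.

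First I would prove a \emph{forced-prices} lemma. When one agent, say agent~$2$, reports a fixed vertex $V\in\{A,B,C,D\}$, writing out the four incentive constraints among agent~$1$'s reports $A,B,C,D$ chains into a cycle of inequalities that forces the payments charged for $\detalloci[1](A,V),\detalloci[1](B,V),\detalloci[1](C,V)$ to coincide. These three allocations are exactly the scaled edge vectors $\kappa(V)\,(e_i+e_j)$ with $\kappa(A),\kappa(B),\kappa(C),\kappa(D)=1,2,3,4$, so under equal payments agent~$1$ receives the scaled edge vector that \emph{omits her smallest-weight coordinate}; by symmetry of the construction the mirror statement holds when agent~$1$'s report is fixed. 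The key structural feature is that $\kappa(D)=4$ is double what linear interpolation over $A,B,C$ would predict, and this super-additivity at the centroid is what the contradiction exploits. I expect the delicate point to be exactly here: the lemma compares only vertex reports, so I must argue it still pins the outcome at genuinely interior reports, i.e.\ rule out extra menu options selected only at non-vertex types. This is the step I anticipate will be the main obstacle, and I would discharge it by showing any such option fails to complete feasibly in $\feasible$ against agent~$1$'s simultaneously forced allocation.

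Next I would choose an interior type $t_2^{*}\in\convhull(\dettypespacei[2])$ whose \emph{smallest} coordinate is the third, and read off agent~$1$'s allocation at the two profiles $(A,t_2^{*})$ and $(D,t_2^{*})$. Fixing agent~$1$ at $A$, the (mirror) lemma makes agent~$2$ receive the scale-$1$ vector omitting coordinate~$3$, namely $[1,1,0]$; since $[1,1,0]$ as an agent-$2$ allocation completes feasibly in only one way, agent~$1$ must receive $[1,1,0]$, so $\extalloci[1](A,t_2^{*})=[1,1,0]$. Fixing agent~$1$ at $D$, agent~$2$ instead receives the scale-$4$ vector omitting coordinate~$3$, namely $[4,4,0]$, which admits exactly two feasible completions, forcing $\extalloci[1](D,t_2^{*})\in\{[2,2,0],[4,4,0]\}$.

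Finally I would test weak monotonicity for agent~$1$ at the fixed partner type $t_2^{*}$, between her own types $A=[1,0,0]$ and $D=[\tfrac13,\tfrac13,\tfrac13]$, i.e.\ check the sign of $\langle A-D,\ \extalloci[1](A,t_2^{*})-\extalloci[1](D,t_2^{*})\rangle$. In both admissible cases this inner product is negative ($-\tfrac13$ when $\extalloci[1](D,t_2^{*})=[2,2,0]$, and $-1$ when it is $[4,4,0]$), because the coordinate-$1$-heavy type $A$ ends up with a strictly smaller allocation than the balanced type $D$ — the scale inversion caused by $\kappa(D)=4$. This contradicts weak monotonicity and hence DSIC, proving the theorem. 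Beyond routine arithmetic, the only genuine work is the obstacle flagged above: certifying that the forced allocations really do hold at the interior type $t_2^{*}$ and not merely at the vertices.
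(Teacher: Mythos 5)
Your overall architecture matches the paper's proof, with the roles of the two agents swapped (immaterial, by the symmetry of the construction): lock the payments at vertex profiles (the paper's \Cref{lem:equal-pay}); show that at a mixed profile --- one agent at a vertex, the other at an interior type --- the interior agent's allocation must come from the menu of the corresponding column (the paper's \Cref{lem:fixed-menu}); use utility maximization at equal prices plus the non-product structure of $\feasible$ to pin both agents' allocations at $(A,t_2^{*})$ and $(D,t_2^{*})$; and exhibit the weak-monotonicity violation between types $A$ and $D$. Your feasibility bookkeeping (unique completion of $[1,1,0]$, two completions of $[4,4,0]$) and the final arithmetic ($-\tfrac13$ and $-1$) are correct and mirror the paper's.

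The gap is exactly at the step you yourself flagged as the main obstacle, and your proposed way of discharging it does not work. You want to rule out off-menu allocations at, say, $(A,t_2^{*})$ ``by showing any such option fails to complete feasibly in $\feasible$ against agent~1's simultaneously forced allocation.'' But nothing forces agent~1's allocation at that profile: the extension property pins allocations only where \emph{both} reports are vertices, and agent~1's own DSIC constraints at $(A,t_2^{*})$ compare it only to other profiles $(r,t_2^{*})$, which are equally unconstrained. Feasibility alone excludes nothing, since every element of $\feasible$ is a legitimate joint outcome: if agent~2 received the off-menu allocation $[2,2,0]$, the joint outcome could be $\detallocs(B,B)=([2,2,0],[2,2,0])\in\feasible$. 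So the forcing you invoke is circular --- agent~1's allocation becomes forced only \emph{after} the fixed-menu property for agent~2 is established. The paper closes this gap with an incentive argument, not a feasibility one: using the locked payments, it writes down the DSIC constraint preventing the interior type from deviating to a vertex, together with the two DSIC constraints preventing vertex types $V_1$ and $D$ from deviating to the interior type; it then combines the latter two with weights $\tfrac23$ and $\tfrac13$ and observes they cannot both be tight, precisely because the scale factors $1,2,3,4$ differ across columns (for $V_2\neq V_2'$, $\langle V_1, \detalloci[1](V_1,V_2)-\detalloci[1](V_1,V_2')\rangle = \kappa(V_2)-\kappa(V_2')\neq 0$), yielding a strict inequality that contradicts the first constraint. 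You would need to supply this (or an equivalent) argument for your mirror-image lemma; with that in place, the rest of your proof goes through.
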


Before proving Theorem~\ref{thm:det}, we make several preparatory observations.

A key difficulty with multi-dimensional preferences is the the lack of a payment identity \`a la Myerson \cite{Myerson81}.  
In order to argue that any extension of an allocation rule is not DSIC, one has to either check the many cyclic (or weak) monotonicity conditions, or show that no payment rule can support the extension in a DSIC mechanism.
We designed $\dettypespaces$ and~$\detallocs$ carefully so that the allocations ``lock'' the payment rules.

\begin{lemma}
\label{lem:equal-pay}
For any allocation rule $\extallocs$ that is an extension of $\detallocs$, if $\extallocs$ can be implemented by a DSIC mechanism with payment rule~$\pays$, then for any $V \in \{A, B, C, D\}$, $\payi[1](A, V) = \payi[1](B, V) = \payi[1](C, V) = \payi[1](D, V)$,
and
$\payi[2](V, A) = \payi[2](V, B) = \payi[2](V, C) = \payi[2](V, D)$.
\end{lemma}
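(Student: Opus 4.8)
The plan is to derive all the claimed equalities purely from the incentive constraints among the \emph{discrete} type profiles in $\dettypespaces$, on which any extension $\extallocs$ necessarily agrees with $\detallocs$. Since $\extallocs$ is DSIC on all of $\convhull(\dettypespaces)$, truthtelling is in particular optimal whenever both agents' true types lie in $\{A,B,C,D\}$; and there $\extalloci[1]$ coincides in distribution with $\detalloci[1]$, which is deterministic, so every value I need can be read straight off the allocation diagram. Moreover, by the symmetry $\detalloci[2](V,r) = \detalloci[1](r,V)$, agent~$2$'s constraints with agent~$1$'s report fixed at $V$ mirror agent~$1$'s constraints with agent~$2$'s report fixed at $V$, so it suffices to prove the statement for agent~$1$.

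First I would fix agent~$2$'s report at an arbitrary $V \in \{A,B,C,D\}$ and write $x^r := \detalloci[1](r,V)$ and $p_r := \payi[1](r,V)$ for $r \in \{A,B,C,D\}$. The constraint that a true type~$t$ does not profit by reporting~$r$ is $\langle t, x^t\rangle - p_t \geq \langle t, x^r\rangle - p_r$, i.e.\ $p_r - p_t \geq \langle t, x^r - x^t\rangle$. The mechanism was engineered so that several deviations are \emph{exactly value-preserving}: whenever $\langle t, x^r\rangle = \langle t, x^t\rangle$ the constraint collapses to $p_r \geq p_t$, and chaining such inequalities around a directed cycle forces the corresponding payments to be equal.

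The equality $p_B = p_D$ comes for free in every column, because rows $B$ and $D$ of the diagram are identical, so $x^B = x^D$ for every~$V$; two reports that always receive the same allocation must carry the same payment under DSIC, as otherwise the more heavily charged one would deviate to the other. It then remains to show $p_A = p_B = p_C$. For $V = A$, where $x^A = [1,1,0]$, $x^B = [0,1,1]$, $x^C = [1,0,1]$, I would check the value-preserving cycle ``$A$ reports $C$, $C$ reports $B$, $B$ reports $A$'' (each deviation keeps the value at $1$), yielding $p_C \geq p_A \geq p_B \geq p_C$ and hence $p_A = p_B = p_C$. For $V \in \{B,C,D\}$ the column has the scaled form $x^A = [s,0,s]$, $x^B = [s,s,0]$, $x^C = [0,s,s]$ with $s \in \{2,3,4\}$, and the analogous value-preserving cycle ``$A$ reports $B$, $B$ reports $C$, $C$ reports $A$'' forces $p_A = p_B = p_C$. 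Combining with $p_B = p_D$ gives $\payi[1](A,V) = \payi[1](B,V) = \payi[1](C,V) = \payi[1](D,V)$, and the symmetric computation for agent~$2$ (using $\detalloci[2](V,r) = \detalloci[1](r,V)$) gives the second family of equalities.

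The only real work is the bookkeeping in the previous paragraph: for each of the two column shapes one must confirm that the three chosen deviations are \emph{exactly} indifferent --- the inner products must agree on the nose --- and that the three resulting inequalities close into a directed cycle. This is precisely where the careful design of $\detallocs$ is essential: the cycle closes only because the off-diagonal value in each deviation was tuned to match the truthful value, so these elementary inner-product checks cannot be skipped. Beyond them, the argument is immediate.
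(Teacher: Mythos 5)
Your proposal is correct and matches the paper's own proof essentially step for step: the same reduction to agent~$1$ by symmetry, the same identical-rows argument for $p_B = p_D$, and the same value-preserving deviation cycles ($A\to C$, $C\to B$, $B\to A$ for the column $V=A$; $A\to B$, $B\to C$, $C\to A$ for $V \neq A$) chained into equalities. The inner-product checks you flag as the ``real work'' are exactly the equalities the paper verifies before invoking the DSIC constraints.
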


\begin{proof}
We prove the lemma for agent~$1$, and the statement for agent~$2$ follows by symmetry.
Note that when $V=A$
\begin{align*}
    \langle A , \detalloci[1](A,V)\rangle = \langle A , \detalloci[1](C,V)\rangle; \\
    \langle C , \detalloci[1](C,V)\rangle = \langle C , \detalloci[1](B,V)\rangle; \\
    \langle B , \detalloci[1](B,V)\rangle = \langle B , \detalloci[1](A,V)\rangle.
    \end{align*}
By DSIC, for any $V \in \dettypespace$, we have
\begin{align*}
    \langle A , \detalloci[1](A,V)\rangle & -\payi[1](A,V)  \\
     & \geq  \langle A , \detalloci[1](C,V)\rangle -\payi[1](C,V); \\
    \langle B , \detalloci[1](B,V)\rangle & -\payi[1](B,V) \\
    & \geq \langle B , \detalloci[1](A,V)\rangle -\payi[1](A,V);
     \\
     \langle C, \detalloci[1](C,V)\rangle & -\payi[1](C,V) \\
     & \geq \langle C , \detalloci[1](B,V)\rangle -\payi[1](B,V).
\end{align*}

Therefore
     $\payi[1](A,V) \leq  \payi[1](C,V) \leq \payi[1](B, V) \leq \payi[1](A, V)$.
Hence all inequalities are tight and we have 
$\payi[1](C,V) = \payi[1](A,V) =  \payi[1](B,V)$.

Similarly when $V\neq A$ we have,
\begin{align*}
    \langle A , \detalloci[1](A,V)\rangle = \langle A , \detalloci[1](B,V)\rangle; \\
    \langle B , \detalloci[1](B,V)\rangle = \langle B , \detalloci[1](C,V)\rangle; \\
    \langle C , \detalloci[1](C,V)\rangle = \langle C , \detalloci[1](A,V)\rangle.
    \end{align*}
    
Again by using the corresponding DSIC constraints we have  $\payi[1](A,V) \leq  \payi[1](B,V) \leq \payi[1](C, V) \leq \payi[1](A, V)$.
Hence all inequalities are tight and we have 
$\payi[1](C,V) = \payi[1](A,V) =  \payi[1](B,V)$.

For type~$D$'s payment, note that $\detalloci[1](B, V)= \detalloci[1](D, V)$. 
If $\payi[1](B, V) \neq \payi[1](D, V)$, one of $B$ and~$D$ must be incentivized to misreport the other type.  Therefore $\payi[1](B, V) = \payi[1](D, V)$.
\end{proof}
In figure 2 in the supplementary materials we give a $3$ dimensional visualization of the agent's value for each allocation which gives intuition into the proof of \cref{lem:equal-pay}.

\begin{lemma}
\label{lem:fixed-menu}
If $\tilde{x}$ is a deterministic DSIC extension of~$\detallocs$, then for $\typei[1] = $ $\frac{1}{3}A + \frac{1}{3}B + \frac{1}{3}D$, and any $V_2 \in \dettypespace$, $\extalloci[1](\typei[1], V_2) \in $ $\{\extalloci[1](A, V_2), \extalloci[1](B, V_2),$ $\extalloci[1](C, V_2),$ $ \extalloci[1](D, V_2)\}$.
\end{lemma}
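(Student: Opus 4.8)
The plan is to fix agent~2's report $V_2$ and work with agent~1 only. Abbreviate the four pure-type allocations against $V_2$ by $a = \extalloci[1](A, V_2)$, $b = \extalloci[1](B, V_2)$, $c = \extalloci[1](C, V_2)$, $d = \extalloci[1](D, V_2)$, and write $w = \extalloci[1](\typei[1], V_2)$ for the allocation we must pin down; since $\extallocs$ extends $\detallocs$, the first four equal the corresponding entries of the displayed matrix (in particular $b = d$ in every column). By Lemma~\ref{lem:equal-pay} the payments these four types make against $V_2$ share a common value $p^*$; let $q$ be the payment at $\typei[1]$ and set $\delta = q - p^*$.

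My main tool is the convexity $\typei[1] = \tfrac13(A + B + D)$ applied to the DSIC constraints in both directions between $\typei[1]$ and the pure types $A, B, D$. Summing the three constraints ``type $X$ does not prefer reporting $\typei[1]$'' and using linearity of the value gives the upper bound $\langle \typei[1], w\rangle - q \le \tfrac13\big(\langle A, a\rangle + \langle B, b\rangle + \langle D, d\rangle\big) - p^*$. In the other direction, the constraints ``$\typei[1]$ does not prefer reporting $X$'' for $X \in \{A,B,D\}$ yield the three lower bounds $\langle \typei[1], w\rangle - q \ge \langle \typei[1], x\rangle - p^*$ for $x \in \{a,b,d\}$.

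The crux is a short calculation, done once per column, showing that the largest lower bound exactly equals the upper bound, i.e. $\max\big(\langle \typei[1], a\rangle, \langle \typei[1], b\rangle, \langle \typei[1], d\rangle\big) = \tfrac13\big(\langle A, a\rangle + \langle B, b\rangle + \langle D, d\rangle\big)$. This is precisely where the design of $\detallocs$ is used: equal payments make each pure type's own allocation its favourite, so in the matrix of values $\langle X, x_Y\rangle$ the diagonal sum dominates every column sum, and the construction is arranged so that one column attains equality. Granting this identity, the two bounds coincide, $\langle \typei[1], w\rangle - q$ is pinned to their common value, and since the three ``does-not-prefer-$\typei[1]$'' slacks are nonnegative and now sum to zero, each must be tight. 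Tightness reads $\langle A, w - a\rangle = \langle B, w - b\rangle = \langle D, w - d\rangle = \delta$; since $A, B$ are standard basis vectors and $D$ their average, this forces $w = [\,k+\delta,\ k+\delta,\ \delta\,]$, where $k \in \{1,2,3,4\}$ is the common scale of column~$V_2$.

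Finally I would invoke feasibility. As $\extallocs$ maps into $\feasible$, the vector $w$ must be the agent-1 part of a feasible social outcome, and every such vector appearing in the matrix has exactly one zero coordinate and two equal positive coordinates drawn from $\{1,2,3,4\}$. The only way $[\,k+\delta,\ k+\delta,\ \delta\,]$ can match this pattern with nonnegative entries is $\delta = 0$, giving $w = [k,k,0]$, which coincides with $a$ when $V_2 = A$ and with $b = d$ otherwise; in every case $w \in \{a, b, c, d\}$, as claimed. I expect the identity in the third paragraph to be the only real obstacle, since it is what collapses the sandwich and forces all three incentive constraints tight; everything downstream is the bookkeeping above together with the rigid feasibility pattern.
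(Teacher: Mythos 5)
Your proof is correct, but it takes a genuinely different route from the paper's. The paper argues by contradiction: it supposes $\extalloci[1](\typei[1], V_2)$ equals some off-column allocation $\detalloci[1](V_1, V_2')$ with $V_2' \neq V_2$, identifies the on-column report $V_1^*$ that both $A$ and $B$ value equally and positively, and combines the constraints that $V_1$ and $D$ must not want to deviate to $\typei[1]$ (with weights $\tfrac{2}{3}$ and $\tfrac{1}{3}$) to contradict $\typei[1]$'s constraint against reporting $V_1^*$; the strictness driving the contradiction comes from the fact that values for allocations in different columns genuinely differ. You instead argue directly: you sandwich $\typei[1]$'s utility between the lower bounds given by its own IC constraints toward $A,B,D$ and the averaged upper bound obtained by summing the IC constraints of $A$, $B$, $D$ against deviating to $\typei[1]$, observe that the two bounds coincide (a finite per-column check, which indeed holds in all four columns), conclude that all three pure-type constraints are tight, pin $\extalloci[1](\typei[1], V_2)$ to the form $[k+\delta, k+\delta, \delta]$, and then use the rigid pattern of $\feasible$ (every feasible agent-1 vector is nonnegative with exactly one zero coordinate) to force $\delta = 0$. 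Your argument buys more than the lemma asks: it pins the allocation exactly to $[k,k,0]$ and shows $\payi[1](\typei[1],V_2) = \payi[1](A,V_2)$, thereby subsuming \Cref{lem:general-equal-pay} for this type and the first step of the paper's proof of \Cref{thm:det}, whereas the paper's proof confines its use of the construction to a single strict inequality and never needs the feasibility pattern. Both proofs rest on \Cref{lem:equal-pay}. Two small blemishes in yours, neither fatal: first, $D$ is $\tfrac{1}{3}(A+B+C)$, not the average of $A$ and $B$ --- the coordinate computation still goes through with the correct $D$, since tightness for $D$ reads $\sum_j w_j = \sum_j d_j + 3\delta$ and the construction satisfies $\sum_j d_j = a_1 + b_2$ in every column; second, the key identity $\max\bigl(\langle \typei[1], a\rangle, \langle \typei[1], b\rangle, \langle \typei[1], d\rangle\bigr) = \tfrac{1}{3}\bigl(\langle A, a\rangle + \langle B, b\rangle + \langle D, d\rangle\bigr)$ is asserted rather than computed, so the four-column verification should be written out explicitly for completeness.
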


\begin{proof}
For the sake of contradiction, assume $\extallocs$ is a DSIC extension of~$\detallocs$, implementable by payment rule~$\pays$, and for $\typei[1]$
and $V_2 \in \dettypespace$, $\extalloci[1](\typei[1], V_2) = \detalloci[1](V_1, V_2')$ for some $(V_1, V_2') \in \dettypespaces$ and $V_2' \neq V_2$.

For any $V_2$, one of $\detalloci[1](A, V_2)$ and $\detalloci[1](B, V_2)$ gives an equal positive value to both $A$ and $B$.  Let $V_1^*$ be the type that induces this equally valued allocation. (For example, if $V_2 = B$, then $\detalloci[1](A, B) = [2, 0, 2]$ and $\detalloci[1](B, B) = [2, 2, 0]$. Both $A$ and $B$ have the same value for $\detalloci[1](B, B)$ and so type~$B$ would be $V_1^*$.)

Observe that, for the allocation $\detalloci[1](V_1, V_2')$, $V_1$ has positive value $\langle V_1, \detalloci[1](V_1, V_2') \rangle$, and no other type has higher value for it.
Therefore, type~$\typei[1]$ has value at most $\langle \frac{2}{3} V_1 + \frac{1}{3} D, \detalloci[1](V_1, V_2') \rangle$ for the allocation.
In order for $\typei[1]$ to have no incentive to misreport~$V_1^*$, we must have
\begin{multline}
\langle \frac{2}{3} V_1 + \frac{1}{3} D, \detalloci[1](V_1, V_2') \rangle - \payi[1](\typei[1], V_2) 
\\ 
\geq \langle \frac{2}{3} V_1^* + \frac{1}{3} D, \detalloci[1](V_1^*, V_2) \rangle - \payi[1](V_1^*, V_2) 
\label{eq:t1-V1}
\end{multline}

On the other hand, in order for type~$V_1$ not to have incentive for deviating to~$\typei[1]$, we have
\begin{multline}
    \langle V_1, \detalloci[1](V_1, V_2) \rangle - \payi[1](V_1, V_2) 
    \\
    = \langle V_1^*, \detalloci[1](V_1^*, V_2) \rangle - \payi[1](V_1^*, V_2) 
    \\
    \geq \langle V_1, \detalloci[1](V_1, V_2') \rangle - \payi[1](\typei[1], V_2);
    \label{eq:V1-t1}
    \end{multline}
where for the equality we used the fact that the value obtained by reporting truthfully is the same for every type in $\{A, B, C\}$ given a fixed type of the opponent, and that $\payi[1](V_1, V_2) = \payi[1](V_1^{*}, V_2)$ by Lemma~\ref{lem:equal-pay}.  

Similarly, in order for type~$D$ not to have incentive for deviating to~$\typei[1]$, we have
    \begin{multline}
    \langle D, \detalloci[1](D, V_2) \rangle - \payi[1](D, V_2) 
    \\
    = \langle D, \detalloci[1](V_1^*, V_2) \rangle - \payi[1](V_1^*, V_2) 
    \\
    \geq \langle D, \detalloci[1](V_1, V_2') \rangle - \payi[1](\typei[1], V_2),
    \label{eq:D-t1}
\end{multline}
where for the equality we used the fact that type~$D$ has the same value for all allocations given a fixed type of the opponent, and that $\payi[1](D, V_2) = \payi[1](V_1, V_2)$ by Lemma~\ref{lem:equal-pay}.
Crucially, \eqref{eq:V1-t1} and~\eqref{eq:D-t1} cannot both be tight, because by construction, for any $V_2 \neq V_2'$, 
\begin{multline*}
\langle V_1, \detalloci[1](V_1, V_2) - \detalloci[1](V_1, V_2') \rangle \\
= \frac{3}{2} \langle D, \detalloci[1](D, V_2) - \detalloci[1](V_1, V_2') \rangle \neq 0.  
\end{multline*}
Therefore, $\frac{2}{3} \cdot$ \eqref{eq:V1-t1} $+ \frac{1}{3} \cdot$ \eqref{eq:D-t1} gives
\begin{multline*}
\langle \frac{2}{3} V_1^* + \frac{1}{3} D, \detalloci[1](V_1^*, V_2) \rangle - \payi[1](V_1^*, V_2)
\\ 
> \langle \frac{2}{3} V_1 + \frac{1}{3} D, \detalloci[1](V_1, V_2') \rangle - \payi[1](\typei[1], V_2),
\end{multline*}
which contradicts \eqref{eq:t1-V1}.
\end{proof}
By the same reasoning as for Lemma~\ref{lem:equal-pay}, the following lemma follows from Lemma~\ref{lem:fixed-menu}.

\begin{lemma}
\label{lem:general-equal-pay}
If $\extallocs$ is a deterministic DSIC extension of~$\detallocs$, implementable by payment rule $\pays$, then for any $\typei[1]$ in the interior of $\convhull(\dettypespace)$ and any $V_2 \in \dettypespace$, $\payi[1](\typei[1], V_2) = \payi[1](A, V_2)$.
\end{lemma}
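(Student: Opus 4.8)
The plan is to combine Lemma~\ref{lem:fixed-menu} with the ``equal allocation forces equal payment'' argument already used for type~$D$ in Lemma~\ref{lem:equal-pay}. Throughout I would fix an agent-$2$ report $V_2 \in \dettypespace$ and an interior type $\typei[1]$, and exploit that, since $\extallocs$ extends $\detallocs$, we have $\extalloci[1](V, V_2) = \detalloci[1](V, V_2)$ for every vertex $V \in \{A, B, C, D\}$, where these four allocations all carry the common payment $\payi[1](A, V_2)$ by Lemma~\ref{lem:equal-pay}. The goal is to show that the interior type is forced onto one of these four priced options.

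First I would establish that $\extalloci[1](\typei[1], V_2)$ equals one of the four column-$V_2$ allocations. Since every feasible joint allocation is one of the sixteen table entries, agent~$1$'s component of $\extallocs(\typei[1], V_2)$ is $\detalloci[1](V_1, V_2')$ for some $V_1, V_2' \in \dettypespace$, and Lemma~\ref{lem:fixed-menu} is exactly the assertion that the column index satisfies $V_2' = V_2$. The subtlety is that Lemma~\ref{lem:fixed-menu} is proved only for the single interior type $\tfrac13 A + \tfrac13 B + \tfrac13 D$, so I expect the main obstacle to be re-running its contradiction argument for a general interior point $\typei[1] = \alpha A + \beta B + \gamma C + \delta D$ with all weights positive. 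Assuming $V_2' \neq V_2$, one would choose the comparison vertex $V_1^*$, bound the value $\typei[1]$ places on $\detalloci[1](V_1, V_2')$ by $\langle (1-\delta) V_1 + \delta D, \detalloci[1](V_1, V_2')\rangle$ (pushing the $A, B, C$ weight onto the value-maximizing vertex $V_1$), and form the positive combination of the no-deviation constraints of $V_1$ and $D$ with weights $1-\delta$ and $\delta$. The strictness that produces the contradiction rests on an identity analogous to the displayed relation forcing $\langle V_1, \detalloci[1](V_1, V_2) - \detalloci[1](V_1, V_2')\rangle \neq 0$, and checking that such a strict relation survives for every interior decomposition is the step demanding care.

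With the allocation pinned down as $\extalloci[1](\typei[1], V_2) = \detalloci[1](V, V_2) = \extalloci[1](V, V_2)$ for some vertex $V$, the reports $\typei[1]$ and $V$ hand agent~$1$ identical allocations against $V_2$. Then, exactly as for type~$D$ in Lemma~\ref{lem:equal-pay}, DSIC of the extension forces $\payi[1](\typei[1], V_2) = \payi[1](V, V_2)$: were the two payments unequal, the type facing the higher charge would strictly gain by reporting the other and obtaining the same allocation more cheaply, violating incentive compatibility. Since Lemma~\ref{lem:equal-pay} gives $\payi[1](V, V_2) = \payi[1](A, V_2)$ for every vertex, the chain yields $\payi[1](\typei[1], V_2) = \payi[1](A, V_2)$, which is the claim; the agent-$2$ statement follows by the symmetry of $\detallocs$.
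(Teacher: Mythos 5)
Your two-step plan --- pin the interior type's allocation onto the column-$V_2$ menu via Lemma~\ref{lem:fixed-menu}, then force payment equality because two reports receiving identical allocations under DSIC must carry identical payments, then chain through Lemma~\ref{lem:equal-pay} --- is exactly the paper's proof, which disposes of this lemma in a single sentence asserting precisely that combination. Your last paragraph is a correct expansion of that sentence. You also correctly spotted the real subtlety, which the paper glosses over: Lemma~\ref{lem:fixed-menu} is proved only for the single type $\frac{1}{3}A+\frac{1}{3}B+\frac{1}{3}D$, whereas the present lemma is asserted for every interior type (the paper never needs more, since Theorem~\ref{thm:det} invokes it only at that one type).

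However, the repair you sketch for a general interior point does not go through. Write $\typei[1]=\alpha A+\beta B+\gamma C+\delta D$ with all weights positive, let $c$ and $c'$ be the positive entry values of columns $V_2$ and $V_2'$ (each column-$V_2$ allocation gives value $c$ to exactly two of $A,B,C$, zero to the third, and $\frac{2c}{3}$ to $D$), and let $p,q$ denote the locked vertex payment and $\payi[1](\typei[1],V_2)$. Your $(1-\delta,\delta)$ combination of the no-deviation constraints of $V_1$ and $D$ gives, strictly, $(1-\delta)c+\delta\frac{2c}{3}-p \;>\; (1-\delta)c'+\delta\frac{2c'}{3}-q \;\geq\; \langle \typei[1],\detalloci[1](V_1,V_2')\rangle-q$, and $\typei[1]$'s own DSIC constraint continues this with $\geq \langle \typei[1],\detalloci[1](V_1^*,V_2)\rangle-p$. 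But that last quantity equals $(\alpha+\beta)c+\delta\frac{2c}{3}-p$, \emph{not} $(1-\delta)c+\delta\frac{2c}{3}-p$, so the chain closes to $(1-\delta)c>(\alpha+\beta)c$, i.e.\ $\gamma c>0$ --- which is true for your decomposition, not a contradiction. The paper's argument closes only because its type has $\gamma=0$, making the upper bound and the deviation value coincide; and note the delicate point is not the strictness identity you single out (that needs only $c\neq c'$ and survives every decomposition) but this $\gamma c$ slack. The fix is to choose the decomposition, not the weights: since $D$ is the centroid, the triangles $ABD$, $ACD$, $BCD$ cover $\convhull(\dettypespace)$, so every interior type can be written as $w_1V+w_2V'+\delta D$ with $\{V,V'\}\subset\{A,B,C\}$, $\delta>0$, and zero weight on the third basis type; taking $V_1^*$ to be the row whose column-$V_2$ allocation pays exactly the pair $\{V,V'\}$ makes the deviation value equal $(1-\delta)c+\delta\frac{2c}{3}$, and the contradiction closes. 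With that repair (or with the lemma restated for the single type it is actually used for), your argument is complete.
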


We are now ready to prove Theorem~\ref{thm:det}.
\begin{proof}[Proof of Theorem~\ref{thm:det}]
Suppose $\extallocs$ is a deterministic DSIC extension of~$\detallocs$.
We show a contradiction by showing that $\extallocs$ must violate weak monotonicity.

Consider $\typei[1] = \tfrac 1 3 A + \tfrac 1 3 B + \tfrac 1 3 D$ and when agent~$2$'s type is~$A$.  
Since $\typei[1]$ could report any type in $\dettypespace$, she has as options $\detalloci[1](A, A), \detalloci[1](B, A), \detalloci[1](C, A)$ and $\detalloci[1](D, A) \}$, all at the same price by Lemma~\ref{lem:general-equal-pay}.
By Lemma~\ref{lem:fixed-menu}, these are also all the allocations she could possibly get.

Since $[1, 1, 0]$ is the only allocation for which both types $A$ and~$B$ have positive value, it is $\typei[1]$'s preferred allocation, i.e., $\extalloci[1](\typei[1], A)$ must be $[1, 1, 0]$.  
This in turn implies $\extalloci[2](\typei[1], A) = [1, 1, 0]$.  
(Recall that $\feasible$ is not a product space, and the only allocation in which agent~$1$ gets $[1, 1, 0]$ is $\detallocs(A, A) = [1, 1, 0, 1, 1, 0]$.) 

Similarly, one can show $\extalloci[1](\typei[1], D) = [4, 4, 0]$, which implies $\extalloci[2](\typei[1], D) = [2, 2, 0]$ or $[4, 4, 0]$.  But in either case, weak monotonicity is violated for agent~$2$'s types $A$ and~$D$.  For example, if $\extalloci[2](\typei[1], D) = [2, 2, 0]$, we have
\begin{align*}
    \langle A, [1, 1, 0] \rangle + \langle D, [2, 2, 0] \rangle < \langle A, [2, 2, 0] \rangle + \langle D, [1, 1, 0] \rangle.
\end{align*}
Therefore, no deterministic DSIC extension of~$\detallocs$ is possible.
\end{proof}

\subsection{Non-existence of randomized extensions}

We need a more convoluted construction and a more careful argument to prove the impossibility of extensions that are possibly randomized.  
We build on $\dettypespaces$ and~$\detallocs$ to construct $\randtypespaces$ and~$\randallocs$ and prove Theorem~\ref{thm:rand}.

We first raise types in~$\dettypespace$ to a space of seven dimensions.  Define $A' = [1, 0, 0, 0, 0, 0, 0]$, $B' = [0, 1, 0, 0, 0, 0, 0]$, $C' = [0, 0, 1, 0, 0, 0, 0]$ and $D' = \frac 1 3 (A' + B' + C')$.
For ease of notation, we define a mapping $\det: \{A', B', C', D'\} \to \dettypespace$, with $\det(A') = A, \det(B') = B, \det(C') = C$ and $\det(D') = D$.  
We also introduce four new types, $E' = [0, 0, 0, 1, 0, 0, 0]$, 
$F'=[0,0,0,0,1,0,0]$, $G'=[0,0,0,0,0,1,0]$ and $H'=[0,0,0,0,0,0,1]$.  Define $\randtypespace = \{A', B', C', D', E', F', G'\}$,  and $\randtypespaces = \randtypespace \times \randtypespace$.

We now define~$\randallocs$, which is again symmetric, in the sense that $\randalloci[1](V_1, V_2) = \randalloci[2](V_2, V_1)$ for every $V_1, V_2 \in \randtypespace$.  We therefore only describe $\randalloci[1]$.  When both agents report types in $\{A', B', C', D'\}$, the first three coordinates of each agent's allocation are given by $\detallocs$ when fed by the corresponding types in~$\dettypespaces$, and the remaining coordinates are filled in according to the opponent's report.  More specifically,
\begin{multline}
\forall V_1 \in \{A',B',C',D'\}, \\
\randalloci[1](V_1, A')=[\detalloci[1](\det(V_1), A), 0, 100, 100,100],\\
\randalloci[1](V_1, B')=[\detalloci[1](\det(V_1), B),100 , 0, 100, 100],\\
\randalloci[1](V_1, C')=[\detalloci[1](\det(V_1), C),100, 100, 0, 100], \\
\randalloci[1](V_1, D')=[\detalloci[1](\det(V_1), D), 100, 100, 100,0]
\end{multline}

For the other types, we have
\begin{multline*}
    \forall V_1 \in \{E', F', G', H'\}, \forall V_2 \in \{A', B', C', D'\}, \\
    \randalloci[1](V_1, V_2) = \randalloci[1](C', V_2).\\
    \forall V_2 \in \{E', F', G', H'\}, \forall V_1 \in \randtypespace, \\
    \randalloci[1](V_1, V_2) = [0, 0, 0, 100, 100, 100, 100].
\end{multline*}

Note that $\randallocs$ itself is deterministic.
The difficulty we need to overcome in this section is that the \emph{extension} of~$\randallocs$ may be randomized, and we must show that any extension to~$\convhull(\randtypespaces)$ cannot be DSIC.

The set of feasible allocations is $\feasible $ $= $ $\{\randallocs(V_1, V_2)\}_{V_1, V_2 \in \randtypespace}$.
Again we emphasize that the set of feasible allocations is \emph{not} a product space.  

We first show that a subset of the payments are still ``locked'' as they were in the deterministic setting. 
\begin{lemma}
\label{lem:randequal-pay}
For any allocation rule $\extallocs$ that is an extension of $\randallocs$, if $\extallocs$ can be implemented by a DSIC mechanism with payment rule~$\pays$, then for any $V \in \{A', B', C', D'\}$ and any $V',V'' \in \randtypespace$, $\payi[1](V', V) = \payi[1](V'', V)$,
and
$\payi[2](V, V') = \payi[2](V, V'')$.
\end{lemma}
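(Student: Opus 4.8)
The plan is to prove the statement for agent~$1$; the claim for agent~$2$ then follows from the symmetry of $\randallocs$. Fix agent~$2$'s type at some $V \in \{A', B', C', D'\}$ and let $\pays$ be any DSIC payment rule implementing $\extallocs$. Since the seven points of $\randtypespace$ are vertices of $\convhull(\randtypespace)$ and $\extallocs$ agrees with the (deterministic) $\randallocs$ on $\randtypespaces$, every DSIC constraint between two reports drawn from $\randtypespace$ is a genuine constraint of the mechanism, and all allocations it involves are deterministic. I would split agent~$1$'s reports into the two groups $\{A', B', C', D'\}$ and $\{E', F', G'\}$ and show that in each group the payment equals $\payi[1](C', V)$.

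For the group $\{A', B', C', D'\}$, the key observation is that these four types are supported on the first three coordinates only, while by construction the first three coordinates of $\randalloci[1](V'', V)$ are exactly $\detalloci[1](\det(V''), \det(V))$. Hence for all $V', V'' \in \{A', B', C', D'\}$ we have $\langle V', \randalloci[1](V'', V)\rangle = \langle \det(V'), \detalloci[1](\det(V''), \det(V))\rangle$, so the values --- and therefore the entire web of indifferences and DSIC inequalities used in the proof of Lemma~\ref{lem:equal-pay} --- coincide with those of the deterministic construction. The same cyclic argument then yields $\payi[1](A', V) = \payi[1](B', V) = \payi[1](C', V)$, and since $\detalloci[1](B, \det(V)) = \detalloci[1](D, \det(V))$ forces $\randalloci[1](B', V) = \randalloci[1](D', V)$, also $\payi[1](D', V) = \payi[1](B', V)$. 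Thus all four payments equal $\payi[1](C', V)$.

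For the group $\{E', F', G'\}$, the definition of $\randallocs$ gives $\randalloci[1](E', V) = \randalloci[1](F', V) = \randalloci[1](G', V) = \randalloci[1](C', V)$: each such type receives literally the same allocation as the report~$C'$. Hence, for $W \in \{E', F', G'\}$, writing the DSIC constraint that type~$W$ does not deviate to~$C'$ and the constraint that type~$C'$ does not deviate to~$W$, the value terms cancel on both sides (the allocations are identical), leaving $\payi[1](W, V) \le \payi[1](C', V)$ and $\payi[1](C', V) \le \payi[1](W, V)$, so $\payi[1](W, V) = \payi[1](C', V)$. Combining the two groups gives $\payi[1](V', V) = \payi[1](C', V)$ for every $V' \in \randtypespace$, which is the claim.

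The only real work is the bookkeeping in the first group: one must verify that the value identities driving Lemma~\ref{lem:equal-pay} survive verbatim after the lift to seven dimensions, i.e.\ that types $A'$--$D'$ never ``see'' the four padding coordinates. Once that reduction is made explicit, the argument for $E', F', G'$ is just the standard fact that identical outcomes force identical payments under DSIC, and no difficulty arises from the extension being potentially randomized, since at every profile in $\randtypespaces$ the extension coincides with the deterministic rule $\randallocs$.
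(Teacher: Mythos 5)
Your proof is correct and follows essentially the same route as the paper's: for $\{A',B',C',D'\}$ you reduce to the cyclic argument of Lemma~\ref{lem:equal-pay} (the paper does this implicitly; you make explicit the key point that these types have zero weight on the padding coordinates, so all values coincide with the deterministic construction), and for $\{E',F',G',H'\}$ you use the fact that identical allocations force identical payments under DSIC, exactly as the paper does. Your added observation that the constraints only involve profiles in $\randtypespaces$, where the possibly-randomized extension is in fact deterministic, is a correct and worthwhile clarification rather than a deviation.
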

\begin{proof}
The proof for types in $\{A',B',C',D'\}$ follows the same steps as \Cref{lem:equal-pay}. For any type in $\{E',F',G',H'\}$ note that they receive the same allocation as type $C'$ and therefore if they are charged a payment that is different from the payment $C'$ is charged than either $C'$ or that type would be incentivized to deviate.
\end{proof}

In order to have a DSIC convex extension in this setting we must satisfy a condition similar to \cref{lem:fixed-menu} with the higher dimensional versions of the types from $\dettypespaces$.

\begin{lemma}
If $\extallocs$ is a DSIC extension of~$\randallocs$, then for $\typei[1] = \frac{1}{3}A' + \frac{1}{3}B' + \frac{1}{3}D'$,
and any $V_2 \in \{A',B',C',D'\}$,
$\extalloci[1](\typei[1], V_2)$ is supported on 
$\{\extalloci[1](A', V_2), \extalloci[1](B', V_2), \extalloci[1](C', V_2), \extalloci[1](D', V_2)\}$.

\label{lma:invariant-random}
\end{lemma}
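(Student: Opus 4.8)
The statement to prove is that for the special type $\typei[1] = \frac13 A' + \frac13 B' + \frac13 D'$ and any $V_2 \in \{A',B',C',D'\}$, a DSIC (possibly randomized) extension must place $\extalloci[1](\typei[1], V_2)$ only on the four menu entries $\extalloci[1](A',V_2), \extalloci[1](B',V_2), \extalloci[1](C',V_2), \extalloci[1](D',V_2)$. My plan is to mimic the structure of the deterministic Lemma~\ref{lem:fixed-menu}, but to rephrase everything in expectation so that it survives randomization, and to exploit the extra seven-dimensional ``tagging'' coordinates to kill off every allocation except the intended four.

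\emph{Step 1 (payments are locked).} First I would invoke Lemma~\ref{lem:randequal-pay}: for a fixed $V_2 \in \{A',B',C',D'\}$ every type in $\randtypespace$ pays the same amount $\payi[1](\cdot, V_2)$, and in particular $\typei[1]$ pays this common price too. This removes payments as a degree of freedom and reduces the DSIC constraints between these types to pure comparisons of expected values, exactly as in the deterministic proof.

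\emph{Step 2 (why only the four menu entries survive).} The key leverage is the extra coordinates $4$--$7$ carrying the value $100$. For a fixed opponent report $V_2 \in \{A',B',C',D'\}$, every feasible allocation $\randalloci[1](V_1, V_2')$ that agent~$1$ could receive has three of these four tag-coordinates equal to $100$ and one equal to $0$, and \emph{which} coordinate is zero is determined by $V_2'$, the opponent type the allocation ``belongs to''. Since $\typei[1]$ has zero weight on the tag coordinates (its support is in the first three plus $D'$, which is also supported on the first three), $\typei[1]$ is indifferent among the tags; but types $A', B', C', D'$ themselves are also indifferent among tags. The real separation comes from the first three coordinates, where I would rerun the argument of Lemma~\ref{lem:fixed-menu}: I pick $V_1^*$ to be the report yielding the allocation on which both $A'$ and $B'$ have equal positive value, write the expected-value DSIC inequalities for $\typei[1]$ not deviating to $V_1^*$, for each of the ``correct'' types $V_1$ and for $D'$ not deviating to $\typei[1]$, and take the convex combination $\frac23 \cdot (\text{constraint for } V_1) + \frac13 \cdot (\text{constraint for } D')$. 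The same strict inequality that drove the deterministic contradiction --- that $V_1$ and $D$ cannot both be tight because $\langle V_1, \detalloci[1](V_1,V_2)-\detalloci[1](V_1,V_2')\rangle = \frac32\langle D, \cdots\rangle \neq 0$ --- forces any probability mass on an off-menu allocation $\randalloci[1](V_1, V_2')$ with $V_2' \neq V_2$ to create a profitable deviation. Because these inequalities are linear in the allocation and we are now working with an expected allocation, the contradiction propagates: any positive probability on such an entry strictly helps some deviating type.

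\emph{Step 3 (handling the newly added types $E',F',G',H'$).} Unlike the deterministic case, the randomized extension's support could in principle include allocations tied to reports $E', F', G', H'$. I would argue these are excluded by the tag coordinates: allocations arising when the opponent reports in $\{E',F',G',H'\}$ put $100$ on coordinate $4$ and zeros on the first three, and allocations where \emph{agent~1} reports in $\{E',F',G',H'\}$ (against $V_2 \in \{A',B',C',D'\}$) coincide with $\randalloci[1](C', V_2)$, hence are already among the four menu entries. So the genuinely new feasible allocations have a tag pattern incompatible with the menu-entry tag pattern for the given $V_2$, and assigning $\typei[1]$ any mass there is detectable through the comparisons in Step~2.

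\emph{Main obstacle.} The hard part will be Step~2 made rigorous under randomization: I must show that it is not enough to rule out a single off-menu point, but that \emph{any} convex combination placing positive weight off the four-entry menu violates one of the three expectation-form DSIC inequalities. The clean way is to treat $\extalloci[1](\typei[1], V_2)$ as an expected vector, express each DSIC constraint as a linear functional of this vector, and show the feasible region cut out by ``$\typei[1]$, $V_1$, and $D'$ all lack profitable deviations'' intersects the feasible polytope only in the convex hull of the four menu entries --- essentially arguing that the strict gap from the deterministic construction certifies a separating hyperplane that the off-menu feasible allocations violate.
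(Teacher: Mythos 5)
There is a genuine gap, in two places. First, your Step~1 is circular: Lemma~\ref{lem:randequal-pay} locks payments only for types \emph{in} $\randtypespace$; it says nothing about the hull type $\typei[1]=\frac13 A'+\frac13 B'+\frac13 D'$, which is not in $\randtypespace$. The claim that ``$\typei[1]$ pays this common price too'' is precisely the kind of conclusion (cf.\ Lemma~\ref{lem:general-equal-pay} in the deterministic development) that follows \emph{from} the menu lemma you are trying to prove, not something that can be fed into it. In any correct proof, $\payi[1](\typei[1],V_2)$ must be carried as a free variable; the paper's proof works exactly by trapping this one unknown payment between two incompatible bounds.

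Second, and more fundamentally, Step~2 locates the separation in the wrong coordinates. You assert that the tag coordinates $4$--$7$ are irrelevant to the main comparison (since $A',B',C',D'$ and $\typei[1]$ all place zero weight there) and that ``the real separation comes from the first three coordinates,'' proposing to rerun Lemma~\ref{lem:fixed-menu} in expectation. But that deterministic argument is exactly what fails under randomization --- this is why the paper does not simply reuse $\dettypespaces$ but lifts the construction to seven dimensions. Under a mixture, the off-menu mass can be spread over allocations $\randalloci[1](V_1,V_2')$ associated with \emph{different} types $V_1$, so there is no single type whose deviation constraint can be combined (with weights $\frac23$ and $\frac13$) with $D'$'s constraint to reproduce the strict-inequality contradiction; your ``Main obstacle'' paragraph concedes this step is missing and only gestures at a separating-hyperplane argument without executing it. The paper's proof instead runs entirely on the tag coordinates: for $V_2\in\{A',B',C',D'\}$ there is a type $\vertypei[\text{not}]\in\{E',F',G',H'\}$ whose value is $0$ on all four menu entries and exactly $100$ on \emph{every} feasible off-menu allocation agent~$1$ can receive. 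Writing $\beta>0$ for the off-menu probability, DSIC for $\typei[1]$ against deviating to $\vertypei[\text{max}]$ yields the upper bound $\payi[1](\typei[1],V_2)\le 3\beta+\payi[1](\vertypei[\text{max}],V_2)$ (the constant bounding $\typei[1]$'s value spread), while DSIC for $\vertypei[\text{not}]$ against deviating to $\typei[1]$, combined with Lemma~\ref{lem:randequal-pay} applied only to \emph{discrete} types, yields the lower bound $\payi[1](\typei[1],V_2)\ge 100\beta+\payi[1](\vertypei[\text{max}],V_2)$; hence $3\beta\ge 100\beta$, contradicting $\beta>0$. So the coordinates you relegated to bookkeeping in Step~3 are the engine of the entire proof, and without them (or a completed version of your hyperplane argument, which you do not supply) the proposal does not go through.
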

\begin{proof}
For the sake of contradiction, assume $\extallocs$ is a DSIC extension of~$\randallocs$, implementable by payment rule~$\pays$, and for some $V_2 \in \{A',B',C',D'\}$,

there exist a pair of types $\vertypei[1], \vertypei[2]'$ such that $\Prx{\extallocs(\typei[1], \vertypei[2]) = \randallocs(\vertypei[1], \vertypei[2]')} > 0$ for a $\vertypei[2]'\neq\vertypei[2]$. 
Let $\fixedmenuset$ denote the set of allocations $\{\extalloci[1](A', V_2), \extalloci[1](B', V_2), \extalloci[1](C', V_2), \extalloci[1](D', V_2)\}$.

Let $\alpha = \Prx{\extalloci[1](\typei[1], \vertypei[2]) \in \fixedmenuset}$ and 
$\beta = \Prx{\extalloci[1](\typei[1], \vertypei[2]) \not \in \fixedmenuset}.$  
Then by assumption, $\beta > 0$.

Also let $\extalloci[1]^{\alpha}(\typei[1]) = \Ex{\extalloci[1](\typei[1],\vertypei[2])  \mid \extalloci[1](\typei[1],\vertypei[2]) \in \fixedmenuset}$ 
and $\extalloci[1]^{\beta}(\typei[1]) = \Ex{\extalloci[1](\typei[1],\vertypei[2])  \mid \extalloci[1](\typei[1],\vertypei[2]) \not\in \fixedmenuset}$.
Then the expected value for agent~$1$ truthfully reporting~$\typei[1]$ is $\langle \typei[1], \alpha \extalloci[1]^\alpha (\typei[1])+ \beta \extalloci[1]^\beta(\typei[1]) \rangle$.

In order for $\extallocs$ to be DSIC, it must provide type $\typei[1]$ with at least as much utility as it could receive from deviating to any type in $\randtypespace$.  

Let $\vertypei[\text{max}] \in \argmax_{\vertypei[1]'\in \randtypespace} \langle \typei[1],\randalloci[1](\vertypei[1]',\vertypei[2])) \rangle$ then $\extallocs$ must satisfy the following DSIC constraint,
\begin{align}
    \alpha &\langle \typei[1],  \extalloci[1]^{\alpha}(\typei[1]) \rangle + \beta \langle \typei[1], \extalloci[1]^{\beta}(\typei[1])\rangle - \payi[1](\typei[1],\vertypei[2]) \geq \nonumber \\ &\langle \typei[1],\randalloci[1](\vertypei[\text{max}],\vertypei[2])) \rangle - \payi[1](\vertypei[\text{max}],\vertypei[2]). \label{eq:randIC-contr}
\end{align}
 We use this constraint to place an upper bound on $\payi[1](\typei[1],\vertypei[2])$.

By definition of~$V_{\text{max}}$, 
\begin{align*}
    \langle \typei[1],\randalloci[1](\vertypei[\text{max}],\vertypei[2])) \rangle \geq \langle \typei[1], \extalloci[1]^{\alpha}(\typei[1]) \rangle.
\end{align*}
 Subtracting $\langle \typei[1],\randalloci[1](\vertypei[\text{max}],\vertypei[2])) \rangle$ from both sides of \cref{eq:randIC-contr} and utilizing the fact that $\alpha+\beta=1$ we have
\begin{align*}
    \beta \langle \typei[1], \extalloci[1]^{\beta}(\typei[1]) \rangle - \beta \langle \typei[1],\randalloci[1](\vertypei[\text{max}],\vertypei[2]) \rangle \\ \geq \payi[1](\typei[1],\vertypei[2]) - \payi[1](\vertypei[\text{max}],\vertypei[2]).
\end{align*}
Now using the fact that $\typei[1]$'s maximum difference in value for any two allocations in~$\feasible$ is upper bounded by~3, we get
\begin{align}
    3\beta+\payi[1](\vertypei[\text{max}],\vertypei[2]) &\geq \payi[1](\typei[1],\vertypei[2]). \label{eq:pay-bound}
\end{align}
 
By construction, for $\vertypei[2] \in \{A', B', C', D'\}$, there exists a type $\vertypei[\text{not}] \in \{E',F',G',H'\}$ which has value $0$ for every allocation in $\fixedmenuset$. (For example if $V_2=A'$ then all of the allocations in $\{\extalloci[1](A',A'), \extalloci[1](B',A'), \extalloci[1](C', A'), \extalloci[1](D', A')\}$ have a $0$ in the $4^{\text{th}}$ coordinate and in this case $V_{\text{not}}=E'$). Now by \Cref{lem:randequal-pay}, $V_{\text{not}}$ receives utility  $-\payi[1](V_{\text{not}}, \vertypei[2]) = -\payi[1](\vertypei[\text{max}],\vertypei[2])$ when reporting truthfully against opponent type $\vertypei[2]$.

Therefore, in order to keep $\vertypei[\text{not}]$ from deviating to type $\typei[1]$, we must satisfy the following DSIC constraint,
\begin{align}
    -\payi[1](\vertypei[max],\vertypei[2]) \geq & \alpha \langle \vertypei[\text{not}],  \extalloci[1]^{\alpha}(\typei[1]) \rangle \nonumber + \beta \langle \vertypei[\text{not}], \extalloci[1]^{\beta}(\typei[1])\rangle \\
    & - \payi[1](\typei[1],\vertypei[2]). 
    \label{eq:tempDSIC}
\end{align}
Since $\extalloci[1]^{\alpha}(\typei[1])$ is the expected allocation conditioned on the resulting allocation being an element of $\fixedmenuset$, and $V_{\text{not}}$ has zero value for any allocation in $\fixedmenuset$, 
we have $\langle V_{\text{not}}, \extalloci[1]^\alpha \rangle = 0$.

Therefore, 

\begin{align}
    \payi[1](\typei[1],\vertypei[2])  &\geq \beta \langle \vertypei[\text{not}], \extalloci[1]^{\beta}(\typei[1])\rangle + \payi[1](\vertypei[\text{max}],\vertypei[2]). \label{eq:pay-bound2}
\end{align}

Notice that for every allocation outside of $\fixedmenuset$ the type $\vertypei[\text{not}]$ has value exactly $100$ and therefore $\langle \vertypei[\text{not}], \alloc_{\beta}(\type)\rangle = 100$. We now use this fact along with \cref{eq:pay-bound} and \cref{eq:pay-bound2} and derive the following contradiction,
\begin{align*}
    3\beta+\payi[1](\vertypei[\text{max}],\vertypei[2])- \payi[1](\vertypei[\text{max}],\vertypei[2]) &\geq \beta \langle \vertypei[\text{not}], \alloc_{\beta}\rangle, \\
    \Rightarrow \quad 3\beta &\geq 100 \beta.
\end{align*}
Since by assumption $\beta > 0$, this is a contradiction.  

\end{proof}

We are now ready to prove Theorem~\ref{thm:rand}.
\begin{proof}[Proof of Theorem~\ref{thm:rand}]

Consider $\typei[1] = \tfrac 1 3 A' + \tfrac 1 3 B' + \tfrac 1 3 D'$  and assume $\Prx{\extalloci[1](\typei[1],A') = \randalloci[1](A', A')} <1$. 

We know from \Cref{lma:invariant-random} that $\extalloci[1](\typei[1], A')$ is supported on $\{\extalloci[1](A',A'), \extalloci[1](B',A'), \extalloci[1](C', A'), \extalloci[1](D', A')\}$. Since $\randalloci[1](A',A')$ is the only allocation for which both $A'$ and $B'$ have positive value, $\typei[1]$ satisfies
\begin{align}
    \forall \vertype' \neq A', \quad \langle \typei[1], \randalloci[1](\vertype',A') \rangle < \langle \typei[1], \randalloci[1](A',A') \rangle.\label{eq:strictval}
\end{align}
But in order for $\typei[1]$ to have no incentive to report~$A'$ when agent~$2$'s type is~$A'$, we have
\begin{align*}
    \Ex{\langle \typei[1], \extalloci[1](\typei[1],A') \rangle} & - \payi[1](\typei[1],A') \geq \\
    & \langle \typei[1], \randalloci[1](A', A') \rangle - \payi[1](A', A').
\end{align*}
Using eq.~\eqref{eq:strictval}, we have
\begin{align}
    \payi[1](A', A') &> \payi[1](\typei[1],A')\label{eq:strictpay}.
\end{align}
We now show that, similar to \Cref{lem:general-equal-pay}, a strictly lower payment for $\typei[1]$ as in \Cref{eq:strictpay} would violate DSIC constraints.
 The DSIC constraint that keeps type $D'$ from deviating to type $\typei[1]$ when the opponents type is $A'$ is
\begin{align}
        \langle D', \randalloci[1](D', A') \rangle - \payi[1](D', A') \nonumber \\ \geq \langle D' , \extalloci[1](\typei[1],A') \rangle -\payi[1](\typei[1],A') \label{eq:Dconstraint}.
\end{align}
Since $\extalloci[1](\typei[1],A')$ is supported on $\{\extalloci[1](A',A'),$ $ \extalloci[1](B',A'),$ $ \extalloci[1](C', A'),$ $\extalloci[1](D', A')\}$ and type $D'$ values all of these allocations equally we have
\begin{align*}
    \langle D', \extalloci[1](\typei[1],A') \rangle = \langle D', \randalloci[1](D', A') \rangle,
\end{align*}
and by \Cref{lem:randequal-pay} we have $\payi[1](D',A') = \payi[1](A',A')$. Using these facts we write \Cref{eq:Dconstraint} as
\begin{align*}
     \payi[1](\typei[1],A') &\geq \payi[1](A',A').
\end{align*}
This contradicts \Cref{eq:strictpay}.

Therefore, $\extalloci[1](\typei[1], A')$ must be deterministically $\randalloci[1](A', A')$.   
But the only allocation in~$\feasible$ with agent~$1$'s allocation being this is $\randallocs(A', A')$, 
and hence $\extalloci[2](\typei[1], A')$ is deterministically $[1, 1, 0,0,100,100,100]$.
Using the same argument, one can show $\extalloci[1](\typei[1], D') = [4, 4, 0,100,100,100,0]$ with probability 1, which implies $\extalloci[2](\typei[1], D')$ is supported on $\{[2, 2, 0,100,0,100,100],[4, 4, 0,100,100,100,0]\}$. For any allocation with this support, weak monotonicity is violated for agent~$2$'s types $A'$ and~$D'$.
\end{proof}
\section{Revenue Gap}
\label{sec:revenue}
In this section we explore the revenue gap between mechanisms that satisfy DSIC and IR on $\supp(D)$ and mechanisms that satisfy DSIC and IR on $\convhull(\supp(D))$. Any type that is in $\convhull(\supp(D))$ and not $\supp(D)$ occurs with probability zero and does not itself contribute to the revenue. It is therefore not obvious that enforcing constraints on these additional types will impact revenue. In fact, \Cref{thm:downward} shows that if the feasible set is downward closed there is no gap in revenue between these two typespaces. However, for arbitrary feasibility sets, a gap can arise because not every mechanism which is DSIC and IR can be extended to $\convhull(\supp(D))$ without changing the payment rules on the support. We show that for some feasibility sets and distributions the gap in revenue between the optimal DSIC and IR mechanism on $\supp(D)$ and the optimal mechanism on $\convhull(\supp(D))$ is unbounded.

 For a feasible set of allocations $\feasible$ and a distribution $\mathbf{D}$ we, let 
 $\OPT(\feasible,\mathbf{D})$ denote the optimal revenue extractable by a DSIC and IR mechanism on $\supp(D)$, and let
 $\widetilde{\OPT}(\feasible,\mathbf{D})$ denote the maximum revenue extractable by a mechanism that is DSIC and IR on $\Conv(\supp(D))$.

 We focus on the single agent case and show that the ratio between $\OPT(\feasible, D)$ and $\widetilde{\OPT}(\feasible, D)$ is unbounded. 
\begin{theorem}
	\label{thm:revenue}
$\forall \alpha$, there exists a feasible set $\feasible$ and distribution $\mathbf{D}$ such that   $\OPT(\feasible,\mathbf{D}) \geq \alpha \widetilde{\OPT}(\feasible, \mathbf{D})$. 
\end{theorem}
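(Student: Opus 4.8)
The plan is to work entirely in the single-agent regime promised by the statement, where the taxation principle lets us identify a DSIC mechanism with a \emph{menu} of (allocation, payment) pairs, each allocation drawn from $\Conv(\feasible)$ so as to also capture randomization: a type simply selects the option maximizing $\langle \type, \alloc\rangle - \pay$, so DSIC is automatic and the indirect-utility function $u(\type) = \max_{(\alloc,\pay)}\left(\langle\type,\alloc\rangle - \pay\right)$ is convex with subgradients in $\Conv(\feasible)$. The crucial observation is that the menu, the DSIC constraints, and the revenue all depend only on the menu itself; the \emph{only} difference between $\OPT(\feasible,\mathbf{D})$ and $\widetilde{\OPT}(\feasible,\mathbf{D})$ is the domain on which we demand $u\ge 0$, namely the finite $\supp(D)$ versus all of $\Conv(\supp(D))$. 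I would therefore build a single feasible set $\feasible$, necessarily non-downward-closed by Theorem~\ref{thm:downward}, together with a distribution $\mathbf{D}$, carrying a parameter that I can drive with $\alpha$.

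First I would lower bound $\OPT$. Here I exhibit an explicit menu offering each type in $\supp(D)$ the feasible allocation it most covets, priced at exactly its value, so that each support type obtains utility $0$; the allocations are chosen so that this full-surplus extraction is DSIC among the support types (each type values its own allocation at least as much as any other type's). This makes $\OPT$ as large as the total coveted value, which I scale with the parameter.

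The substance is the upper bound on $\widetilde{\OPT}$. Because types and allocations are nonnegative and values are linear in the type, $u$ is convex and, restricted to the polytope $\Conv(\supp(D))$, each linear function $\langle\cdot,\alloc\rangle$ attains its minimum at a vertex, i.e.\ at a support type. I would locate an interior type $\type^\circ\in\Conv(\supp(D))$ at which the full-extraction menu's indirect utility dips strictly below zero, and then show, by combining the IR constraint $u(\type^\circ)\ge 0$ with the envelope/DSIC relations tying $\type^\circ$'s chosen allocation back to the owners of that allocation, that restoring $u\ge 0$ on the whole hull forces the paying types' utilities up (equivalently, their payments down) by a factor growing with the parameter. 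To make this hold against \emph{randomized} extensions I would phrase the argument through the convex indirect-utility function, so that lotteries are merely gradients in $\Conv(\feasible)$, or equivalently through LP duality for the revenue program with its continuum of interior IR constraints.

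The main obstacle, and the reason the construction must be delicate, is ruling out a \emph{universal individually-rational option}. Since everything is nonnegative, any allocation offered at price $0$ is IR for every type, and, more dangerously, any $\alloc_0\in\Conv(\feasible)$ giving \emph{all} support types the same value $c$, offered at price $c$, is IR on the entire hull (its value is minimized at a support vertex, at exactly $c$) while leaving every paying type \emph{indifferent}, hence not lured from its full-extraction option. Such an equalizing allocation would force $\widetilde{\OPT} = \OPT$ and destroy the gap; this is precisely why symmetric instances, or instances with only two types, cannot work, as averaging over the symmetry group or solving a single linear equation always yields a feasible equalizer. The construction must therefore be asymmetric and use enough types (more than the ambient dimension) that no point of $\Conv(\feasible)$ equalizes the support types' values, and the hardest step is to prove that, absent such an equalizer, every menu that is IR on all of $\Conv(\supp(D))$ must either price its options low or cannibalize the revenue of the high-paying types, yielding $\widetilde{\OPT}\le \OPT/\alpha$.
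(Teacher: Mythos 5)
There is a genuine gap. Theorem~\ref{thm:revenue} is an existence statement, and its entire mathematical content is a concrete pair $(\feasible, \mathbf{D})$ together with a verification of both bounds; your proposal supplies neither. You do set up the correct framework (single agent, taxation principle, the observation that the only constraints added by passing to $\Conv(\supp(D))$ are the IR constraints), and your plan---full-surplus extraction for the lower bound, an interior hull type whose IR constraint propagates through IC to collapse the high types' payments---is exactly the skeleton of the paper's argument. But you explicitly defer the decisive step (``absent an equalizer, every menu IR on the hull must lose a factor $\alpha$'') as ``the hardest step'' and never prove it, and as stated it is not even the right target: non-existence of an equalizing allocation yields at best \emph{some} revenue loss, whereas the theorem needs a loss that grows without bound, which can only come from a parametrized construction you never exhibit.

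Worse, your guiding heuristics would steer you away from the actual solution: you insist the construction ``must be asymmetric,'' but the paper's instance is fully symmetric under coordinate permutations. It takes $\feasible = \singleonevec$ (the $k$ standard basis vectors of $\mathbb{R}^k$) and $\supp(D) = \singleonevec \cup \{\allones\}$, with mass $1-\epsilon$ on $\allones$ where $\epsilon < 1/\alpha$. Full extraction at price $1$ gives $\OPT(\feasible,\mathbf{D}) = 1$. For the upper bound, the barycenter $\type^{k} = [\tfrac 1 k, \ldots, \tfrac 1 k] \in \Conv(\supp(D))$ has value \emph{exactly} $\tfrac 1 k$ for \emph{every} point of $\Conv(\feasible)$---a uniform identity over all menu options, which is much stronger than your ``indirect utility dips strictly below zero'' and is precisely what defeats arbitrary randomized menus---so IR caps the payment on $\type^{k}$'s chosen option at $\tfrac 1 k$; since $\allones$ values every point of $\Conv(\feasible)$ at exactly $1$, it can grab that option, forcing its own payment down to $\tfrac 1 k$ and giving $\widetilde{\OPT}(\feasible,\mathbf{D}) \leq (1-\epsilon)\tfrac 1 k + \epsilon$. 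Symmetry is harmless here because the permutation group does not act transitively on $\supp(D)$: any allocation valued equally by all basis types must be proportional to the uniform vector, which $\allones$ values $k$ times higher, so no equalizer exists despite the symmetry. Your ``symmetric instance $\Rightarrow$ feasible equalizer'' argument implicitly assumes the symmetry group acts transitively on the support types, and that assumption is exactly what the paper's two-orbit construction avoids.
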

\begin{proof}
Let $\allones$ be the "all-ones" vector of length k and define the set 
\begin{align*}
\singleonevec = \{ \type \in \mathbb{R}^k  \mid \exists j \text{ s. t. } \typei[j] =1 \\ \text{ and } \forall h \neq j, \: \typei[h]=0\}.
\end{align*}
Intuitively, $\singleonevec$ is the set of all vectors of length k with exactly one coordinate being $1$ and the rest of the coordinates being zero.
We define the support, 
$\supp(D) = \singleonevec \cup \vec{\mathbf{1}}$
and the set of feasible allocations $\feasible = \singleonevec$.

We can show that $\OPT(\feasible, \mathbf{D})=1$ for any distribution $\mathbf{D}$ that shares this support. This is obtained by giving each $\type \in \singleonevec$ an allocation $\alloc$ equal to their type i.e. $\type = \alloc(\type)$ and giving type $\type = \allones$ any allocation in the feasible set. We can now charge a payment of $1$ to every type.

Without loss of generality assume $\alpha>1$. Now for any $\alpha$ define $\epsilon< \frac{1}{\alpha}$ and consider the distribution where 
$Pr[\type= \vec{\mathbf{1}}] = 1- \epsilon$ and $Pr[\type \neq \vec{\mathbf{1}}] = \epsilon$. One of the types contained in $\convhull(\supp(D))$ is $\type^{k} = [\frac{1}{k}, \frac{1}{k} \ldots \frac{1}{k}]$. $\type^{k}$ has a value of $\frac{1}{k}$ for every feasible allocation and every lottery of existing allocations, therefore the maximum payment we can charge without violating IR constraints is $\frac{1}{k}$. However, the all ones type has value $1$ for every feasible allocation and can now gain utility of $\frac{k-1}{k}$ by deviating to type $\type^k$. Therefore to maintain incentive compatibility we must lower the price we charge the all ones type to $\frac{1}{k}$ (as well as the payment charged to one of the types $\singleonevec$). Therefore the revenue generated on the convex hull if we take $k$ to be large is, $\widetilde{\OPT}(\feasible, \mathbf{D})\leq \lim_{k\rightarrow \infty} (1-\epsilon)(\frac{1}{k})+\epsilon \leq \frac{1}{\alpha}$.
\end{proof}
\section{Conclusion and Future Work}
\label{sec:disc}
    We presented a DSIC mechanism on a discrete type space which cannot be implemented if DSIC is required on the convex hull of its type space.
    Extension from discrete to convex domains is a common step in automated mechanism design, and it is unfortunate that, as an implication of our result, there exists no general procedure for extending an arbitrary DSIC mechanism to the convex hull of its domain.
    However, as we showed, such extensions are possible in specific settings, with special type spaces (e.g.\@ single dimensional) or special feasible set of allocations (e.g.\@ downward closed or, more generally, single swap feasible).
    Other sufficient conditions for extensibility should be useful to have.
    
    Another possibility we leave open is conditions on an allocation rule (rather than the type space or the set of feasible allocations) that would guarantee its extensibility.  
    We know that an allocation rule is DSIC implementable if and only if it is cyclic monotone.
    Is there a condition stronger than cyclic monotonicity which guarantees both implementability and extensibility?
    
    The mechanism we constructed for the impossibility result is intricate.  A third question we leave open is whether one may find a succinct property that is \emph{necessary} for extensibility; in other words, is there a natural condition whose violation must result in inextensible mechanisms?

\bibliographystyle{aaai}
\bibliography{ref}

\appendix
\section{Supplementary Material}
\label{sec:supp}
\subsection{Proof of \Cref{thm:downward}}
\begin{proof}
For ease of notation let $\zeroalloc$ represent the ``zero-allocation'' (i.e. an all-zeros vector of length $m$). 
We now give, for any DSIC and IR $(\allocs, \pays)$, 
an extended allocation rule $\extallocs$ and extended payment rule $\extpays$ that are DSIC and IR.

Whenever the reported type profile $\types$ contains only types from the original typespace $\typespaces$ (i.e. $\types \in \typespaces$) then by definition our extended allocation rule is $\extallocs(\types)=\allocs(\types)$ and our extended payment rule is $\extpays(\types)=\pays(\types)$.

If exactly one agent~$i$ has a type $\typei \in \convhull(\typespacei)\setminus \typespacei$ and  their opponents have types $\typesmi \in \typespacesmi$, then agent~$i$'s allocation is the same allocation as either the type $\typei^\text{max} = \argmax_{\typei'\in \typespacei} \langle \typei, \alloci(\typei',\typesmi) \rangle - \payi(\typei', \typesmi)\rangle$ or the ``zero-allocation'', whichever provides more utility. 
More formally,  if $\typei \in \convhull(\typespacei)\setminus \typespacei$ and $\typesmi \in \typespacesmi$, the extended allocation rule for $\typei$ is
\begin{align*}
  \extalloci(\typei,\typesmi) =\begin{cases}
  \alloci(\typei^\text{max},\typesmi), & 
  \!\begin{aligned}
  \text{if } \langle \typei, \alloci(\typei^{\text{max}},\typesmi) \rangle \\
  - \payi(\typei^{\text{max}}, \typesmi) \geq 0;
  \end{aligned}
  \\[10pt]
  \zeroalloc, & \text{otherwise;}
  \end{cases}
\end{align*}
and the extended payment rule is
\begin{align*}
  \extpayi(\typei,\typesmi) =\begin{cases}
  \payi(\typei^\text{max},\typesmi), & 
  \!\begin{aligned}
  \text{if } \langle \typei, \alloci(\typei^{\text{max}},\typesmi) \rangle \\
  - \payi(\typei^{\text{max}}, \typesmi) \geq 0;
  \end{aligned}
  \\[10pt]
  0, & \text{otherwise.}
  \end{cases}
\end{align*}

 All other agents $j \neq i$ are given the ``zero-allocation'' and charged a payment of $0$ (i.e. $\extalloci[j](\typei[j], \typesmi[j]) = \zeroalloc$ and $\extpayi[j](\typei[j], \typesmi[j]) = 0$). 

The allocation is feasible 
since  $\allocs(\typei^{\text{max}},\typesmi) \in \feasible$ 
and $\feasible$ is downward closed.


Finally if more than one agent has a type which is not in the original typespace then $\forall i$, $\extalloci(\typei, \typesmi) = \zeroalloc$ and $\extpayi(\typei, \typesmi) = 0$.

It remains to be shown that $\extallocs$ and $\extpays$ are DSIC and IR. To see that $\extallocs$ and $\extpays$ are IR observe that every ``new'' allocation and payment that are not zero are only chosen if they provide non-negative utility and since the ``zero-allocation'' with a payment of $0$ by definition provides $0$ utility, $\extallocs$ must satisfy IR constraints.

We now show that $\extallocs$ and $\extpays$ satisfy DSIC constraints. For any agent $i$ if they have an opponent $j$ who reports $\typei[j] \in \convhull(\typespacei[j])\setminus \typespacei[j]$ then agent $i$ is given the zero-allocation, regardless of their reported type. Therefore, we trivially satisfy DSIC constraints whenever an opponent type profile contains a ``new'' type. 

When $\typesmi \in \typespacesmi$ we must show that agent $i$ has no incentive to deviate, both when they have a ``old'' type and when they have an ``new'' type. 
For an ``old'' type  $\typei \in \typespacei$, deviating to some type $\typei' \in \convhull(\typespacei)\setminus \typespacei$ gives either the same utility as deviating to $\typei^\text{max} = \argmax_{\typei''\in \typespacei} \langle \typei', \alloci(\typei'',\typesmi) \rangle - \payi(\typei'', \typesmi)$ or utility~$0$.  
Since $(\allocs, \payments)$ is DSIC and IR, this deviation cannot provide higher utility than $\langle \typei,\extalloci(\typei,\typesmi) \rangle - \extpayi(\typei, \typesmi)$. 
Finally, for a ``new'' type $\typei \in \convhull(\typespacei)\setminus \typespacei$, by definition of~$\extallocs$, the agent by reporting truthfully is given the allocation and payment that maximizes utility over all possible deviations in $\typespacei$; by the logic above, this maximizes utility over all possible deviations in $\convhull(\typespacei)$ as well. Therefore, the agent has no incentive to deviate.
\end{proof}
\subsection{Proof of \Cref{thm:swap}}
\begin{proof}
This proof follows the same steps as the proof of \Cref{thm:downward}, with swap feasible allocations now playing the role that "zero-allocations'' played in \Cref{thm:downward}. 
Recall that a swap allocation for agent~$i$, denoted $\swapfeasible(i)$, is a feasible allocation vector with the property that if we replace the $i^{\text{th}}$ component of this vector with the $i^{\text{th}}$ component from any other feasible allocation, the resulting allocation is still feasible.

For any DSIC $(\allocs, \pays)$ on~$\typespaces$, 
We give a DSIC extension $(\extallocs, \extpays)$ on $\convhull(\typespaces)$. 

 Whenever the reported type profile $\types$ contains only types from the original typespace $\typespaces$ (i.e. $\types \in \typespaces$),
$\extallocs(\types)=\allocs(\types)$ and
$\extpays(\types)=\pays(\types)$,
as required by definition of an extension.

If only one agent~$i$ has a type $\typei \in \convhull(\typespacei)\setminus \typespacei$ and  their opponents have types $\typesmi \in \typespacesmi$, then agent~$i$ is given the same allocation as the type $\typei^\text{max} = \argmax_{\typei'\in \typespacei} \langle \typei, \alloci(\typei',\typesmi) \rangle - \payi(\typei', \typesmi)\rangle$. 
More formally, for $\typei \in \convhull(\typespacei)\setminus \typespacei$ and $\typesmi \in \typespacesmi$, the extended allocation rule for $\typei$ is
$
  \extalloci(\typei,\typesmi) =
  \alloci(\typei^\text{max},\typesmi)  
$
and the extended payment rule is
$
  \extpayi(\typei,\typesmi) =
  \payi(\typei^\text{max},\typesmi) 
$.

 The allocations for the remaining agents are described by agent~$i$'s swap feasible allocation, and
 are charged a payment of $0$.
 Formally, $\forall j\neq i$, $\extalloci[j](\typei[j], \typesmi[j]) = \swapfeasible_{j}(i)$ and $\extpayi[j](\typei[j], \typesmi[j]) = 0$.

By definition of single swap feasibility, regardless of the value of $\extalloci(\typei,\typesmi)$, $(\extalloci(\typei,\typesmi),\swapfeasible_{-i}(i))$ remains feasible.  

Finally, 
if a (reported) type profile $\types$ has more than one agent with a type not in the original typespace, among agents whose types are not within their original typespaces, let $j$ and~$k$ be the smallest and second smallest indices among them.
The extended allocation for such a type profile $\types$ is $\extallocs(\types)=(\swapfeasible_{j}(k),\swapfeasible_{-j}(j))$ and $\forall i,$ $\extpayi(\types)=0$. 
In other words, every agent's allocation is determined by the swap allocation of agent~$j$, except agent~$j$ himself, whose allocation is given by the swap allocation of agent~$k$.
Since this allocation only changes the $j^{\text{th}}$ component of $\swapfeasible(j)$, it is feasible by definition of SSF.

It remains to show that $(\extallocs, \extpays)$ is DSIC. 
First we deal with the case where an agent~$i$ has at least one opponent whose type is not in the original type space. 
Let $j$ be the smallest index of such an opponent. 
Then agent~$i$ is given $\swapfeasible_{i}(j)$ regardless of their reported type. 
Therefore we trivially satisfy DSIC when an opponent type profile contains a ``new'' type. 

When $\typesmi \in \typespacesmi$ we must show that any agent $i$ has no incentive to deviate, both when they have an ``old'' type and when they have an ``new'' type. 
For an ``old'' type $\typei \in \typespacei$, deviating to some type $\typei' \in \convhull(\typespacei)\setminus \typespacei$ gives the same utility as deviating to $\typei^\text{max} = \argmax_{\typei''\in \typespacei} \langle \typei',
\alloci(\typei'',\typesmi) \rangle - \payi(\typei'', \typesmi)$. 
Since $(\allocs, \pays)$ is DSIC, such a deviation cannot provide higher utility than $\langle \typei, \extalloci(\typei,\typesmi) \rangle - \extpayi(\typei, \typesmi)$. 
Finally, if $\typei \in \convhull(\typespacei)\setminus \typespacei$, then by definition of~$\extallocs$, by reporting truthfully the agent is given the allocation and payment that maximizes their utility over all possible deviations in $\typespacei$; by the logic above, this maximizes utility over all possible deviations in $\convhull(\typespacei)$ as well. Therefore, the agent has no incentive to deviate.
\end{proof}
\begin{figure*}
    \centering
    \includegraphics[scale=0.4]{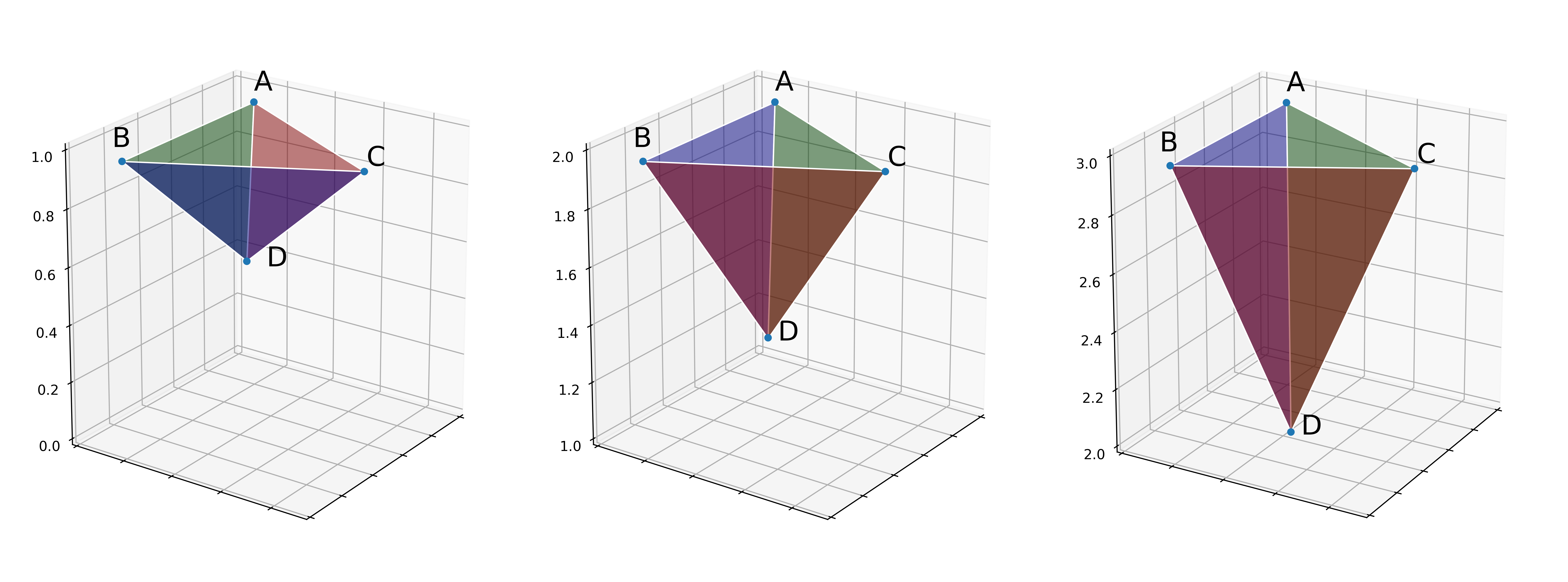}
    \caption{The maximum value for allocations in $\detalloci[1]$ for all types in $\convhull(\dettypespacei[1])$ . Each plot (visually an upside-down hollow pyramid) represents a reported type for agent 2. From left to right these represent agent $2$'s type being $A$, $B$ or $C$ ($\typei[2]=D$ is omitted). Within each plot, or set of planes, the color represents the type in $\dettypespacei$ agent $1$ should report to receive the highest value. The green plane represents $\typei[1]=A$, the blue plane represents $\typei[1]=B$ and the red plane represents $\typei[1]=C$. For example, the green plane in the first set of planes represents the value from allocation $\detalloci[1](A,A)$. From this image we can gain the intuition behind the ``locked'' payments by re-imagining these planes as describing utility when payments are zero.  We can think about changes in payments as vertical shifts in the planes. For example, changing $\payi[1]^{\text{det}}(A,A)$ would shift the plane corresponding to $\detalloci[1](A,A)$. This would either cause it to fall below the red plane at the point $A$ giving $A$ incentive to deviate to type $B$ or cause it to rise above the blue plane at point $B$ giving $B$ an incentive to deviate to type $A$.}
    \label{fig:values}
\end{figure*}
\begin{figure*}
    \centering
    \includegraphics{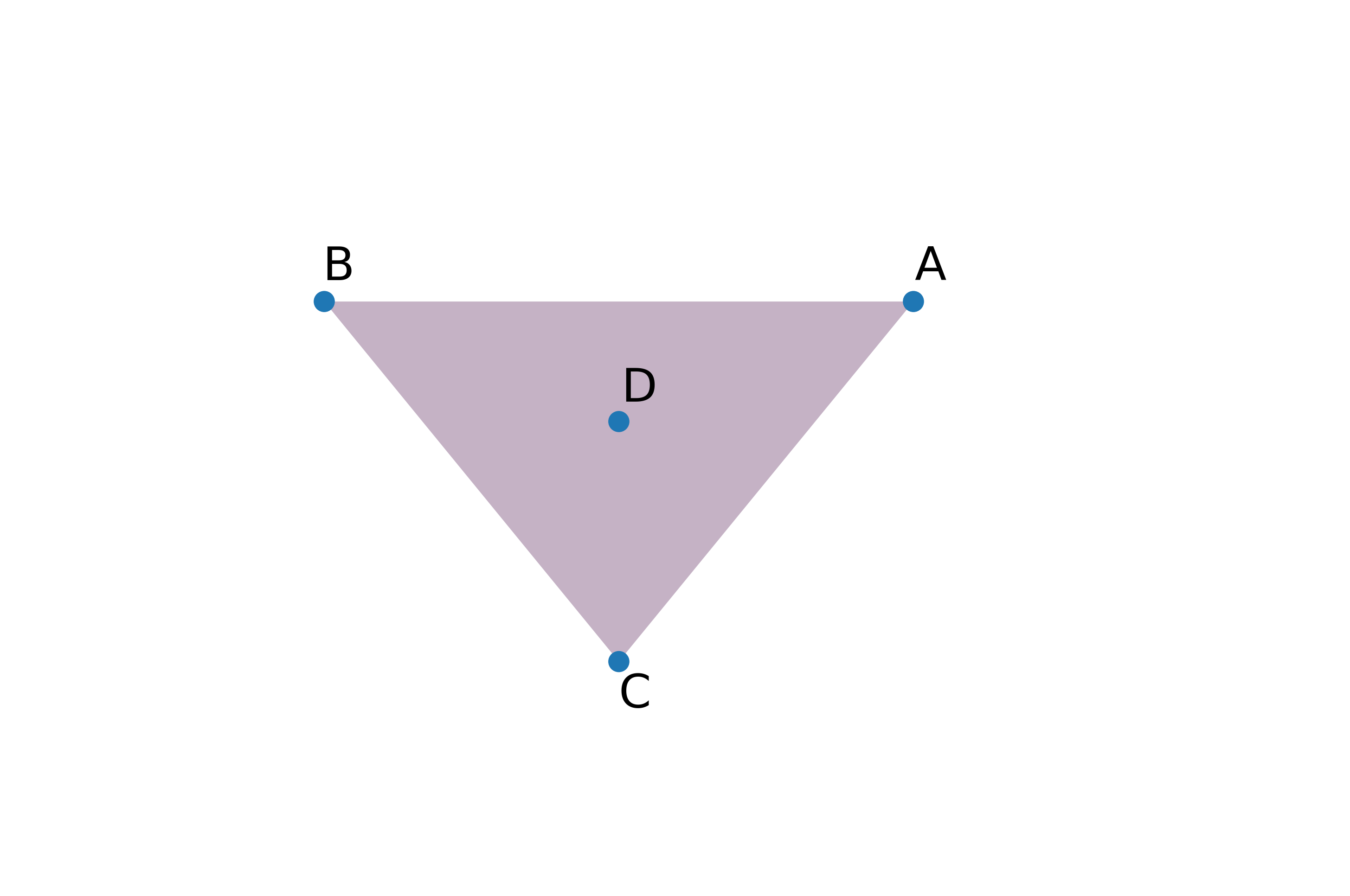}
    \caption{The typespace $\convhull(\dettypespace)$ projected to a $2$ dimensional plane. The types in $\dettypespace$ are labeled.}
    \label{fig:typespace}
\end{figure*}

\end{document}